\theoremstyle{definition}
\newtheorem{theorem}{Theorem}
\newtheorem{note}[theorem]{Note}
\newtheorem{definition}{Definition}
\newtheorem{example}[definition]{Example}
\newtheorem{lemma}[theorem]{Lemma}
\renewcommand{\S}{S}
\newcommand{\SM}{\mathcal{S}}
\newcommand{\R}{R}
\renewcommand{\|}{\,||\,}
\newcommand{\@chapapp}{\relax}%
\noindent\makebox[0mm][r]{$\bullet$}}
\title{A reaction network scheme which implements the EM algorithm}
\author{Muppirala Viswa Virinchi \and Abhishek Behera \and Manoj Gopalkrishnan\
\\India Institute of Technology Bombay, Mumbai, India\\{\{axlevisu, abhishek.enlightened, manoj.gopalkrishnan\}@gmail.com}}
\date{22 April 2018}
\begin{document}
\maketitle

\begin{abstract}
A detailed algorithmic explanation is required for how a network of chemical reactions can generate the sophisticated behavior displayed by living cells. Though several previous works have shown that reaction networks are computationally universal and can in principle implement any algorithm, there is scope for constructions that map well onto biological reality, make efficient use of the computational potential of the native dynamics of reaction networks, and make contact with statistical mechanics. We describe a new reaction network scheme for solving a large class of statistical problems including the problem of how a cell would infer its environment from receptor-ligand bindings. Specifically we show how reaction networks can implement information projection, and consequently a generalized Expectation-Maximization algorithm, to solve maximum likelihood estimation problems in partially-observed exponential families on categorical data. Our scheme can be thought of as an algorithmic interpretation of E. T. Jaynes's vision of statistical mechanics as statistical inference.
\end{abstract}

\section{Introduction}
Many statistical problems involve fitting an exponential family of probability distributions to some data~\cite{anderson1970sufficiency}. Fisher's method of Maximum Likelihood gives a prescription for the best fit: pick that parameter $\theta$ that maximizes the likelihood $\Pr[x\mid \theta]$ of generating the data $x$. In problems of practical interest, the data $x$ is rarely available in full. It is more common to want to maximize a likelihood $\Pr[s\mid \theta]$ where $s=\SM x$ is a low-dimensional linear projection of the data $x$. \footnote{This situation can arise because only a linear projection is observable. It can also happen because we require a rich family of probability distributions on the space of $s$ points, but don't want to give away the nice properties of exponential families. We can achieve both by imagining that our observation $s$ comes from projection from a data vector $x$ living in a higher-dimensional space, and then employ an exponential family of probability distributions on this higher-dimensional space.} The EM algorithm~\cite{dempster1977maximum} is one way to solve this class of problems. We describe a reaction network scheme that implements a geometric version of the EM algorithm~\cite{Amari2016} for exponential families, and linear projections. To fix ideas, consider this example.
\begin{example}\label{ex:running}
Consider a three-sided die with the three sides labeled $X_1,X_2,X_3$ respectively. Suppose the probabilities of the three outcomes depend on two hidden parameters $\theta_1,\theta_2$ according to
$\Pr[X_1\mid\theta_1,\theta_2] \propto\theta_1^2,\,\,\Pr[X_2\mid\theta_1,\theta_2]\propto\theta_1\theta_2,\,\,\Pr[X_3\mid\theta_1,\theta_2]\propto\theta_2^2$. Further suppose that the die is rolled many times by a referee who records the frequences $n_1,n_2,n_3$ of the three outcomes. The outcomes are not visible directly to us. The referee tells us some linear combinations of $n_1, n_2, n_3$, and this is the only information available to us. For example, suppose the referee tells us $s_1=n_1+n_2+n_3$, the total number of die rolls, and also $s_2=n_1+n_2$, the total number of times the die outcome was either $X_1$ or $X_2$. We may be interested in the probability $\Pr[x_1,x_2,x_3,\theta_1,\theta_2 \mid s_1, s_2]$ or the maximum likelihood estimator $\theta^*=\arg\sup_{\theta} \Pr[\theta_1,\theta_2\mid s_1,s_2]$. 

Let $y(\theta_1,\theta_2):=(\theta_1^2,\theta_1\theta_2,\theta_2^2)$. The EM algorithm finds a local minimum of $D((x_1,x_2,x_3) \| y(\theta_1,\theta_2))=$ $x_1\log (x_1/\theta_1^2) - x_1 + \theta_1^2 + x_2\log (x_2/\theta_1\theta_2) - x_2 + \theta_1\theta_2 + x_3\log (x_3/\theta_2^2) - x_3 + \theta_2^2$ where $x_1+x_2+x_3 = s_1$, $x_1+x_2 = s_2$, and $x_1,x_2,x_3,\theta_1,\theta_2 >0$. We minimize $D$ because its global minimum is related to the mod of the probability distribution of interest and to the maximum likelihood estimator. The algorithm proceeds by alternately minimizing $D$ over the space of points $x$ that are consistent with the observations while keeping $\theta$ fixed, then minimizing $D$ with respect to $\theta$ keeping $x$ fixed, and so on iteratively. It halts when we encounter a pair $(x^*,\theta^*)$ which is a fixed point of the iteration. 

Our main contribution is to describe and analyze a novel reaction network scheme that implements the EM algorithm. Below is a reaction network obtained by our scheme for the die example. The dynamics of this network implements a generalized EM algorithm and finds a local minimum of $D$. 
\begin{align*}
X_1&\to X_1 + 2\theta_1 &2\theta_1&\to 0 &X_2&\to X_2+\theta_1+\theta_2 &\theta_1+\theta_2&\to 0\
\\X_3&\to X_3+2\theta_2 & 2\theta_2&\to 0 &X_1 + \theta_2&\to X_2 + \theta_2 &X_2 + \theta_1&\to X_1+\theta_1
\end{align*}
If $(x(t),\theta(t))=(x_1(t),x_2(t),x_3(t),\theta_1(t),\theta_2(t))$ are solutions to the mass-action ODEs for this system then we show in Theorem~\ref{thm:em} that $\frac{dD(x(t) \| y(\theta(t)))}{dt}\leq 0$. 
\end{example}

Notice that the first six reactions change the numbers of the $\theta_1,\theta_2$ species while the $X_1,X_2,X_3$ species act purely catalytically. The last two reactions change the numbers of the $X_1,X_2,X_3$ species while the species $\theta_1,\theta_2$ act purely catalytically. This is a general feature of our reaction scheme. There are two subnetworks, one which changes only the $\theta$ species and is catalyzed by the $X$ species, and the other which changes only the $X$ species and is catalyzed by the $\theta$ species. The first subnetwork computes an \textbf{M-Projection}, and the second computes an \textbf{E-Projection}~(Definition~\ref{def:infoproj}). 

The last two reactions in our example compute an E-Projection, i.e., if $x(t)$ is a solution trajectory to the last two reactions when $\theta_1,\theta_2$ are held fixed, then $\frac{dD(x(t)\|y(\theta))}{dt}\leq 0$. We have described this scheme to compute the E-Projection previously in \cite{virinchi2017stochastic}. Subsection~\ref{subsec:eproj} summarizes this previous work, showing that the subreaction network of our scheme that changes only the $X$ species always has this property, and will find a global minimum over all $x$ compatible with the observations for the function $D(x\|y(\theta))$ when keeping $\theta$ fixed~(Theorem~\ref{thm:eproj}). 

Theorem~\ref{thm:eproj} can be thought of as exploiting a formal similarity between free energy in physics and relative entropy in information theory. We encode the dynamics of the system so that its free energy corresponds to the function that we want to minimize, while the system explores the same space as allowed by the optimization constraints. In this way, we design our chemical system to solve the desired optimization problem.

The first six reactions in our example compute the {M-Projection}, i.e., if $\theta(t)=(\theta_1(t),\theta_2(t))$ is a solution trajectory to the first six reactions when $x_1,x_2,x_3$ are held fixed, then $\frac{dD(x\|y(\theta(t))}{dt}\leq 0$. The first six reactions will find a global minimum for $D(x\|y(\theta))$ over all $\theta$ while keeping $x$ fixed. One of the authors has previously described a similar scheme in \cite{gopalkrishnan2016scheme}. The current scheme is subtly, but significantly, different. It employs fewer reactions, admits simpler proofs, and combines well with the E-Projection scheme to allow the two parts to compute an EM algorithm. This last crucial property was absent in the M-Projection scheme in \cite{gopalkrishnan2016scheme}.
We show all this in general in Subsection~\ref{subsec:mproj} and Theorem~\ref{thm:mproj} for the subreaction network of our scheme that changes only the $\theta$ species.

Functions of the form $D(x(t)\|x'(t))$ are known to be Lyapunov functions for Markov chains when $x(t)$ and $x'(t)$ are solutions to the Markov chain's Master equation~\cite{van1992stochastic}. For nonlinear reaction networks, in contrast, prior to this work, only functions of the form $D(x(t) \| q)$ have been known to be Lyapunov functions, where $q$ is a point of detailed balance for the reaction network. Our M-projection systems are the first class of examples of nonlinear reaction networks with Lyapunov functions of the form $D(x\|x'(t))$ with time dependence on the second argument. The discovery of such a class of reaction networks and Lyapunov functions is a key contribution in this paper.

When the E-Projection reaction network and the M-Projection reaction network evolve simultaneously, we get a continuous-time generalized EM algorithm, where both the $x$ coordinates and the $\theta$ coordinates are being updated continuously. We show in Subsection~\ref{subsec:em} that if $(x(t),\theta(t))$ is a solution trajectory to the reaction network then $\frac{dD(x(t)\|y(\theta(t))}{dt}\leq 0$, so that for a generic initial point the system eventually settles into a local minimum $(\hat x,\hat\theta)$ of $D(x\|y(\theta))$ with $x$ constrained to values consistent with the observations. 
\section{Preliminaries}
\textbf{Notation:} For $u=(u_1,u_2,\dots,u_n)\in\mathbb{R}^n$, define $e^u:=(e^{u_1},e^{u_2},\dots, e^{u_n})\in\mathbb{R}^n_{>0}$. For $x=(x_1,x_2,\dots,x_n)\in\mathbb{R}^n_{>0}$, define $\log x:=(\log x_1,\log x_2,\dots,\log x_n)$. Define $x^u = \prod_{i=1}^n x_i^{u_i}$.
For $S\subseteq\mathbb{R}^n$ and $\beta\in\mathbb{R}^n$, define $\beta+S:=\{\beta+x\mid x\in S\}$ and $e^S:=\{e^x\mid x\in S\}\subseteq\mathbb{R}^n_{>0}$. For a matrix $(a_{ij})_{m\times n}$, its $i$'th row will be denoted by $a_{i.}$ and its $j$'th column will be denoted by $a_{.j}$.
\subsection{Information Geometry}

Fix a countable set $I$. The \textbf{extended relative entropy} $D: \mathbb{R}_{\geq 0}^I \times \mathbb{R}_{\geq 0}^I \rightarrow [-\infty,\infty]$ is 
$
D(x\|y)\coloneqq \sum_{i\in I} x_i\log \left(\frac{x_i}{y_i}\right) -x_i + y_i
$ 
with the convention $0\log 0 = 0$  and $x\log 0 = -\infty$ when $x\neq 0$. 
\begin{note}\label{note:gibbs}
$D(x\|y) = \sum_{i\in I} y_i h(x_i/y_i)$ where $h(x) = x \log x - x + 1$. Since $h(x)$ is nonnegative for all $x\in\mathbb{R}_{\geq 0}$, it follows that $D(x\|y)\geq 0$ with equality iff $x = y$.
\end{note}
\begin{note}
If $\sum_{i\in I} x_i =\sum_{i\in I} y_{i}$, in particular if $x,y$ are probability distributions on $I$, then $D(x\|y)=\sum_{i\in I}x_i\log \left(\frac{x_i}{y_i}\right)$.
\end{note}
\begin{note}
If $x,y$ are Poisson distributions, i.e., $x_i = e^{-\lambda}\frac{\lambda^i}{i!}$ and $y_i = e^{-\mu}\frac{\mu^i}{i!}$ for $i\in\mathbb{Z}_{\geq 0}$ then $\sum_{i\in\mathbb{Z}_{\geq 0}} x_i\log\frac{x_i}{y_i} = D(\lambda\| \mu) = \lambda \log\frac{\lambda}{\mu} -\lambda + \mu$. More generally, the relative entropy between two distributions, each of which is a product of Poisson distributions, equals the extended relative entropy between their rate vectors.
\end{note}

We state the Pythagorean Theorem of Information Geometry~\cite[Theorem 1.2]{Amari2016} for our special case, and give the short proof for completeness.
\begin{theorem}[Pythagorean Theorem]\label{thm:pyth}
For all $P,Q,R\in\mathbb{R}^n_{>0}$, we have $(P-Q)\cdot(\log Q - \log R)=0$ iff $D(P\|Q) + D(Q\|R) = D(P\|R)$
\end{theorem}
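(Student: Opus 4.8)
The plan is to prove the sharper, purely algebraic identity
\[
D(P\|Q) + D(Q\|R) - D(P\|R) = -(P-Q)\cdot(\log Q - \log R),
\]
from which the claimed equivalence is immediate: the left-hand side vanishes exactly when the right-hand side does, i.e. exactly when $(P-Q)\cdot(\log Q - \log R)=0$. Since $P,Q,R\in\mathbb{R}^n_{>0}$, every logarithm is finite and each of the three relative entropies is a finite real number, so there are no $\pm\infty$ issues and we may freely rearrange the finite sums term by term.

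First I would expand each relative entropy using $D(x\|y)=\sum_i x_i\log x_i - x_i\log y_i - x_i + y_i$. Next I would collect the terms of $D(P\|Q)+D(Q\|R)-D(P\|R)$ by type. The self-entropy terms $\sum_i P_i\log P_i$ appearing in $D(P\|Q)$ and in $D(P\|R)$ cancel, and the linear terms $-P_i+Q_i$, $-Q_i+R_i$, $+P_i-R_i$ telescope to $0$ coordinatewise. What remains are the logarithmic cross terms $-P_i\log Q_i + P_i\log R_i + Q_i\log Q_i - Q_i\log R_i$, which factor as $(Q_i-P_i)(\log Q_i-\log R_i)$. Summing over $i$ yields precisely $-(P-Q)\cdot(\log Q-\log R)$, establishing the identity.

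With the identity in hand the theorem follows: $D(P\|Q)+D(Q\|R)=D(P\|R)$ holds iff the difference is zero iff $(P-Q)\cdot(\log Q-\log R)=0$. I do not expect a genuine obstacle here; the computation is routine bookkeeping. The two points that require care are keeping the signs straight when grouping the logarithmic cross terms into the factored form $(Q_i-P_i)(\log Q_i-\log R_i)$, and noting that strict positivity of all coordinates is exactly what guarantees finiteness of every term, so that the term-by-term cancellation and rearrangement are legitimate.
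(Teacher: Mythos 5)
Your proposal is correct and is essentially the same argument as the paper's: both expand $D(P\|Q)+D(Q\|R)-D(P\|R)$, cancel the self-entropy and linear terms, and factor the remaining cross terms into $\sum_i (P_i-Q_i)(\log R_i-\log Q_i) = -(P-Q)\cdot(\log Q-\log R)$, from which the equivalence is immediate. Your added remark that strict positivity guarantees finiteness (so the rearrangement is legitimate) is a point the paper leaves implicit.
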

\begin{proof}
$D(P\|Q) + D(Q\|R) - D(P\|R) = \sum_{i=1}^n P_i\log\frac{P_i}{Q_i} - P_i + Q_i + Q_i\log\frac{Q_i}{R_i} - Q_i + R_i  - P_i\log\frac{P_i}{R_i} + P_i - R_i=\sum_{i=1}^n (P_i-Q_i)(\log R_i - \log{Q_i})$
\end{proof}

\begin{definition}\label{def:infoproj}
An Exponential Projection or \textbf{E-Projection}~\cite{Amari2016} (also called Information Projection or I-Projection~\cite{csiszar2004information}) of a point $y\in\mathbb{R}^n_{\geq 0}$ to a set $X\subset\mathbb{R}^n_{\geq 0}$ is a point $x^*=\arg\min_{x\in X} D(x\|y)$. A Mixture Projection or \textbf{M-Projection} (or reverse I-projection) of a point $x\in\mathbb{R}^n_{\geq 0}$ to a set $Y\subseteq\mathbb{R}^n_{>0}$ is a point $y^*=\arg\min_{y\in Y} D(x\|y)$. 
\end{definition}
If $X$ is convex then the E-Projection $x^*$ is unique~\cite{csiszar2003information}. If $Y$ is log-convex (i.e., $\log Y$ is convex) then the M-Projection $y^*$ is unique~\cite{csiszar2003information}. We will be interested in E-Projections when $X$ is an affine subspace (and hence convex), and M-projections when $Y$ is an exponential family (and hence log-convex). Various problems in probability and statistics can be reduced to computing such projections~\cite{csiszar2003information}. Amari~\cite{Amari2016} has shown that an alternation of these two projections corresponds to the usual EM algorithm~\cite{dempster1977maximum}, and has further argued that various other algorithms in Machine Learning such as k-means clustering, belief propagation, boosting, etc. can be understood as EM.

Birch's theorem is a well-known theorem in the statistics and reaction networks communities~\cite[Theorem~1.10]{pachter2005algebraic},\cite{birch63}. Below we state an extension of Birch's theorem which applies to the extended KL-divergence function, and show the connection to Information Projection. Our contribution is to present the results in a form that brings out the geometry of the situation.

\begin{theorem}[Birch's theorem and Information Projection]\label{thm:birch}
Fix a positive integer $n$. Let $V\subseteq\mathbb{R}^n$ be an affine subspace and let $V^\perp = \{w \mid v\cdot w  = 0 \text{ for all } v\in V\}$ be the orthogonal complement of $V$ in $\mathbb{R}^n$. Then
\begin{enumerate}
\item\label{birch1} For all $\alpha\in\mathbb{R}^n_{>0}$, the intersection of the polytope  $(\alpha + V^\perp)\cap \mathbb{R}^n_{\geq 0}$ with the hypersurface $e^V$ consists of precisely one point $\alpha^*$ called the Birch point of $\alpha$ relative to $V$.
\item\label{eproj2} For every $\beta\in e^V$, the E-Projection of $\beta$ to the polytope $(\alpha + V^\perp)\cap\mathbb{R}^n_{\geq 0}$ is $\alpha^*$. In particular, this E-Projection is unique.
\item\label{mproj3} The M-projection of $\alpha$ to $e^V$ is $\alpha^*$. In particular, this M-projection is unique.
\end{enumerate}
\end{theorem}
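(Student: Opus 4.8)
The plan is to derive all three parts from one existence statement: that the fiber $\alpha+V^\perp$ meets the exponential surface $e^V$ in a point $\alpha^*\in\mathbb{R}^n_{>0}$. Once such an $\alpha^*$ is produced, parts~\ref{eproj2} and~\ref{mproj3}, as well as the uniqueness half of part~\ref{birch1}, fall out of the Pythagorean Theorem~\ref{thm:pyth} by pure bookkeeping with the orthogonal splitting $\mathbb{R}^n=V\oplus V^\perp$. So the real content, and what I expect to be the main obstacle, is the existence of $\alpha^*$.

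For existence I would phrase the M-projection as an unconstrained strictly convex program. Parametrizing points of $e^V$ as $e^v$ with $v\in V$, set
\[
g(v):=D(\alpha\|e^v)=\sum_{i=1}^n\bigl(\alpha_i\log\alpha_i-\alpha_i-\alpha_i v_i+e^{v_i}\bigr),\qquad v\in V .
\]
Then $\nabla g(v)=e^v-\alpha$ and the Hessian is $\operatorname{diag}(e^{v_1},\dots,e^{v_n})\succ0$, so $g$ is strictly convex on $V$. It is also coercive on $V$: along a ray $v=v_0+td$ with $0\neq d\in V$ and $t\to+\infty$, a coordinate with $d_i>0$ makes $e^{v_i}$ blow up exponentially, while if every $d_i\le0$ then some $d_i<0$ makes $-\alpha_i v_i\to+\infty$ (using $\alpha_i>0$); either way $g\to+\infty$. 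Being continuous, strictly convex and coercive, $g$ attains a unique minimizer $v^*\in V$, and stationarity over the subspace $V$ says $\nabla g(v^*)=e^{v^*}-\alpha\in V^\perp$. Thus $\alpha^*:=e^{v^*}$ lies in $e^V$, in $\alpha+V^\perp$, and in $\mathbb{R}^n_{>0}\subseteq\mathbb{R}^n_{\ge0}$; this is the existence half of part~\ref{birch1}.

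With $\alpha^*$ in hand I would prove parts~\ref{eproj2} and~\ref{mproj3} by two applications of Theorem~\ref{thm:pyth}. The key observation is that $\log$ of any point of $e^V$ lies in $V$, whereas the displacement of any competitor from $\alpha^*$ lies in $V^\perp$. For the E-projection: fix $\beta\in e^V$ and any $x$ in the polytope; since both $x$ and $\alpha^*$ lie in $\alpha+V^\perp$ we have $x-\alpha^*\in V^\perp$, while $\log\alpha^*-\log\beta\in V$, so these are orthogonal and Theorem~\ref{thm:pyth} gives $D(x\|\beta)=D(x\|\alpha^*)+D(\alpha^*\|\beta)\ge D(\alpha^*\|\beta)$, with equality iff $D(x\|\alpha^*)=0$ iff $x=\alpha^*$ by Note~\ref{note:gibbs}; hence $\alpha^*$ is the unique E-projection of every $\beta$. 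For the M-projection: fix $y\in e^V$; now $\alpha-\alpha^*\in V^\perp$ and $\log\alpha^*-\log y\in V$ are orthogonal, so $D(\alpha\|y)=D(\alpha\|\alpha^*)+D(\alpha^*\|y)\ge D(\alpha\|\alpha^*)$ with equality iff $y=\alpha^*$, giving the unique M-projection. This same inequality also closes part~\ref{birch1}: any point of $(\alpha+V^\perp)\cap e^V$ is a strict global minimizer of $D(\alpha\|\cdot)$ over $e^V$, and a strict minimizer is unique, so the Birch point is unique.

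The one technical point I would be careful about is that in the E-projection step the competitor $x$ ranges over the closed polytope and may have vanishing coordinates, where Theorem~\ref{thm:pyth} as stated (all three arguments strictly positive) does not literally apply. This is harmless: because $\alpha^*,\beta\in\mathbb{R}^n_{>0}$, the identity $D(x\|\alpha^*)+D(\alpha^*\|\beta)-D(x\|\beta)=(x-\alpha^*)\cdot(\log\beta-\log\alpha^*)$ from the proof of Theorem~\ref{thm:pyth} persists for all $x\in\mathbb{R}^n_{\ge0}$ (the only logarithm of a possibly-zero entry, $\log x_i$, occurs multiplied by $x_i$), and $D(x\|\alpha^*)\ge0$ with equality iff $x=\alpha^*$ still holds by Note~\ref{note:gibbs} since $\alpha^*>0$. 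Apart from this and the coercivity/attainment argument, which is where the only genuine analysis lives, the whole theorem reduces to linear algebra and the Pythagorean identity.
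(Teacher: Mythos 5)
Your proposal is correct, and it follows the paper's strategy everywhere except at the one step where genuine analysis is required, where it takes a different (and in fact more robust) route. Like you, the paper obtains uniqueness in part~\ref{birch1} and parts~\ref{eproj2}--\ref{mproj3} by pure bookkeeping with the Pythagorean identity; the divergence is in producing $\alpha^*$. The paper works on the E-side: it minimizes $D(x\,\|\,\beta)$ over the polytope $(\alpha+V^\perp)\cap\mathbb{R}^n_{\geq 0}$, asserting that this set is compact, and then uses a first-order stationarity condition to conclude the minimizer lies on $e^V$. You work on the M-side: minimize $g(v)=D(\alpha\,\|\,e^v)$ over $v\in V$, using strict convexity plus coercivity, with stationarity along $V$ giving $e^{v^*}-\alpha\in V^\perp$. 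Your route buys two real improvements. First, the compactness assertion in the paper is a gap for the theorem as stated: $(\alpha+V^\perp)\cap\mathbb{R}^n_{\geq 0}$ is bounded only when $V^\perp\cap\mathbb{R}^n_{\geq 0}=\{0\}$, and this can fail (e.g.\ $V=\mathrm{span}\{(1,-1)\}\subseteq\mathbb{R}^2$, where the fiber $\alpha+V^\perp$ meets the quadrant in an unbounded ray, although the Birch point still exists); your coercivity argument needs no such hypothesis. Second, the paper's equality $\lim_{\lambda\to 0}\frac{d}{d\lambda}f((1-\lambda)\alpha^*+\lambda\alpha)=0$ silently assumes the minimizer is not on the boundary of the polytope (at a boundary point one only gets a one-sided inequality, and one must invoke the $-\infty$ inward derivative of $x_i\log x_i$ to exclude boundary minimizers); your unconstrained problem on $V$ has no boundary at all, and you separately patch the only place zero coordinates can enter, namely the Pythagorean identity for competitors $x$ with vanishing entries, which the paper glosses over. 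What the paper's route buys in exchange is brevity: existence and part~\ref{eproj2} come in a single stroke. Two small points to tighten in your write-up: the passage from ``$g\to\infty$ along every ray in $V$'' to ``$g$ has bounded sublevel sets and attains its minimum'' uses convexity (standard, but worth a line: an unbounded sublevel set yields a recession direction along which $g$ stays bounded, contradicting ray-coercivity); and, since the theorem allows $V$ to be affine rather than linear, both you and the paper should read $V^\perp$ as the orthogonal complement of the direction space of $V$ and note that differences such as $\log\alpha^*-\log\beta$ lie in that direction space rather than in $V$ itself --- under the paper's literal definition of $V^\perp$ the statement is false for affine $V$ not containing the origin, so this reading is forced.
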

\begin{proof}
${(1)}$ Fix $\alpha\in\mathbb{R}^n_{>0}$. We first prove uniqueness: suppose for contradiction that there are at least two points of intersection $\alpha^*_1,\alpha^*_2$ of the polytope  $(\alpha + V^\perp)\cap \mathbb{R}^n_{\geq 0}$ with the hypersurface $e^V$. Since $\alpha^*_1,\alpha^*_2\in e^V$, we have $\log\alpha_1^* - \log\alpha_2^*\in V$. Since $\alpha- \alpha_1^*\in V^\perp$, we have $(\alpha-\alpha_1^*)\cdot(\log\alpha_1^* - \log\alpha_2^*)=0$. Then by the Pythagorean theorem, $D(\alpha\|\alpha^*_1) = D(\alpha\|\alpha^*_2) + D(\alpha^*_2\| \alpha^*_1)$ which implies $D(\alpha\|\alpha^*_1)\geq D(\alpha\|\alpha^*_2)$. By a symmetric argument, $D(\alpha\|\alpha^*_2)\geq D(\alpha\|\alpha^*_1)$ and we conclude $D(\alpha\|\alpha^*_2)= D(\alpha\|\alpha^*_1)$. In particular, $D(\alpha^*_2\|\alpha^*_1)=0$ which implies $\alpha^*_1=\alpha^*_2$ by Note~\ref{note:gibbs}. 

To prove that there exists at least one point of intersection, and to show $(2)$, fix $\beta\in e^V$. We will show that the E-Projection $\alpha^*$ of $\beta$ to $(\alpha+V^\perp)\cap\mathbb{R}^n_{\geq 0}$ belongs to $e^V$. This point $\alpha^*$ exists since $D(x\|\beta)$ is continuous, and hence attains its minimum over the compact set $(\alpha+V^\perp)\cap\mathbb{R}^n_{\geq 0}$. Further, because $\alpha^*$ is an infimum, we need that $\lim_{\lambda\to 0} \frac{d f((1-\lambda)\alpha^* + \lambda \alpha)}{d\lambda}=0$. That is, $(\alpha - \alpha^*)\log\frac{\alpha^*}{\beta}=0$, which implies that $\alpha^*\in e^V$ since $\alpha$ could have been replaced by any other arbitrary point of $(\alpha+V^\perp)\cap\mathbb{R}^n_{>0}$. 

$(3)$ now follows because $\alpha^*\in e^V$ implies $D(\alpha\|\alpha^*) + D(\alpha^*\|\beta) = D(\alpha\|\beta)$ for all $\beta\in e^V$, hence $\alpha^*$ is the M-Projection of $\alpha$ to $e^V$.
\end{proof}

\subsection{Reaction Network Theory}
We recall some concepts from reaction network 
theory \cite{feinberg72chemical,horn72necessary,Fein79,Manoj_2011Catalysis,anderson2010product,virinchi2017stochastic}. 

Fix a finite set $S$ of species. An $S$-reaction, or simply a \textbf{reaction} when $S$ is understood from context, is a formal chemical equation
\[
	\sum_{X\in S} y_X X \rightarrow \sum_{X\in S}y'_XX
\]
where the numbers $y_X,y'_X\in\mathbb{Z}_{\geq 0}$ are the \textbf{stoichiometric coefficients} of species $X$ on the \textbf{reactant} side and \textbf{product} side respectively. We write this reaction more pithily as $y\to y'$ where $y,y'\in\mathbb{Z}^{\S}_{\geq 0}$. A \textbf{reaction network} is a pair $(S,R)$ where $R$ is a finite set of $S$-reactions. It is \textbf{reversible} iff $y\to y'\in\R$ implies $y'\to y\in\R$. 

Fix $n,n'\in\mathbb{Z}^S_{\geq 0}$. We say that $n\mapsto_\R n'$ 
iff there exists a reaction $y\to y' \in\R$ with $n-y\in\mathbb{Z}^S_{\geq 0}$ and $n' = n + y'-y$. The \textbf{reachability} relation $n\Rightarrow_\R n'$ is the transitive closure of $\mapsto_\R$.
The \textbf{forward reachability class} of $n_0\in\mathbb{Z}^\S_{\geq 0}$ is the set $\Gamma(n_0)=\{ n \mid n_0\Rightarrow_\R n\}$. The \textbf{stoichiometric subspace} $H_\R$ is the real span of the vectors $\{y'-y \mid y\to y'\in \R\}$. The \textbf{conservation class} containing $x_0\in\mathbb{R}^\S_{\geq 0}$ is the set $C(x_0)=(x_0 + H_\R)\cap\mathbb{R}^S_{\geq 0}$. A reaction network $(\S,\R)$ is \textbf{weakly reversible} iff for every reaction $y\to y'\in\R$, we have $y'\Rightarrow_R y$.

Fix a weakly reversible reaction network $(\S,\R)$. The \textbf{associated ideal} $I_{(\S,\R)}\subseteq \mathbb{C}[\{x_i\mid i\in S\}]$  is the ideal generated by the binomials $\{ x^y - x^{y'}\mid y\to y'\in\R\}$. A reaction network is \textbf{prime} iff its associated ideal is a prime ideal, i.e., for all $f, g\in \mathbb{C}[x]$, if $fg\in I$ then either $f\in I$ or $g\in I$. 

\begin{example}
The reaction network given by the reactions $2X\rightleftharpoons 2Y$ is not prime, as the associated ideal $I_1$ is generated by the binomial $x^2-y^2$ and $(x+y)(x-y)\in I_1$ but $(x+y)\notin I_1$ and $(x-y)\notin I_1$. The reaction network given by the reactions $2X\rightleftharpoons Y$ is prime since the associated ideal $I_2$ is generated by the irreducible binomial $x^2-y$, and if $fg \in I_2$ then either $x^2-y$ divides $f$ or $x^2-y$ divides $g$, that is either $f\in I_2$ or $g\in I_2$.
\end{example}

A \textbf{reaction system} is a triple $(\S,\R,k)$ where $(\S,\R)$ is a reaction network and $k:\R\to\mathbb{R}_{> 0}$ is called the \textbf{rate function}. It is \textbf{detailed balanced} iff it is reversible and there exists a point $q\in\mathbb{R}^\S_{>0}$ such that $k_{y\to y'}\,q^y    = k_{y'\to y}\, q^{y'}$ for every reaction $y\to y'\in\R$. A point $q\in\mathbb{R}^\S_{>0}$ that satisfies the above condition is called a \textbf{point of detailed balance}.
\begin{note}\label{note:dbisaffine}
The set $\{ \log q \mid q\text{ is a point of detailed balance for }(\S,\R,k)\}$ is the simultaneous solution set to the affine system of equations $(y-y')\cdot\log q=\log\frac{k_{y'\to y}}{k_{y\to y'}}$ for all $y\to y'\in R$, and hence constitutes an affine space.
\end{note}

\section{Main}
\begin{definition}\label{def:Bnetwork}
  Let $\S$ be a finite set, and let $\mathcal{B}=\{b_1,b_2,...,b_r\}\subseteq\mathbb{Z}^\S$ be a finite set of integer vectors . For $l=1$ to $r$, let $b_l^+,b_l^-\in\mathbb{Z}^\S_{\geq 0}$ be the positive part and negative part of $b_l$, i.e., 
	\[
	b_{lj}^+ = 
	\begin{cases}
	b_{lj}\text{ if }b_{lj}>0\
	\\0\text{ otherwise}
	\end{cases} \text{ and }b_{lj}^- = \begin{cases}
	-b_{lj}\text{ if }b_{lj}<0\
	\\0\text{ otherwise}
	\end{cases}\text{ for all }j\in S.
	\]
	Then the reaction network $(\S,\mathcal{R_B})$ \textbf{generated by} $\mathcal{B}$ is given by the reactions: $b_l^+\rightleftharpoons b_l^-$ for  $l\in\{1,\dots,r\}$. 
\end{definition}

\begin{example}
If $\S=\{X_1,X_2,X_3\}$ and $\mathcal{B}=\{(1,0,1),(-2,1,1),(1,1,-3)\}$ then $(S,R_\mathcal{B})$ is given by:
$X_1+X_3\rightleftharpoons 0,\, X_2+X_3 \rightleftharpoons 2X_1, \, X_1+X_2\rightleftharpoons 3X_3$\end{example}

\subsection{Reaction networks compute E-Projections}\label{subsec:eproj}
The following theorem shows that points of detailed balance correspond to E-Projections. Further, if a detailed balanced reaction network has no critical siphons then solutions to mass-action kinetics converge to the E-Projections. 

We recall the notion of critical siphon~\cite{Angeli2007598,Manoj_2011Catalysis}. A \textbf{siphon} in a reaction network $(S,R)$ is a set $T\subseteq S$ of species such that for every reaction $y\to y'\in R$, if there exists $i\in T$ such that $y'_i>0$ then there exists $j\in T$ such that $y_j>0$. In particular, if all siphon species are absent, then they remain absent in future. A siphon $T$ is \textbf{critical} iff there exist $x\in\mathbb{R}^S_{\geq 0}$ and $y\in\mathbb{R}^S_{>0}$ such that $x-y\in H_R$ and $\{i\mid x_i=0\}= T$. 

The significance of critical siphons is that their absence allows easy demonstration of a detailed balanced reaction network version of the Markov Chain Ergodic Theorem, which is known as the Global Attractor Conjecture~\cite{horn74dynamics,GeoGac}. We take care to construct reaction network schemes that avoid critical siphons, thus ensuring that our reaction network dynamics provably converges to the right answer. It appears that avoiding critical siphons also confers advantages in terms of rate of convergence. We discuss this further in Subsection~\ref{subsec:rate}.

\begin{theorem}\label{thm:db}
Fix a detailed balanced reaction system $(S,R,k)$ with point of detailed balance $y\in\mathbb{R}^S_{>0}$. Let $x(t)$ be a solution to the mass-action equations for $(S,R,k)$ with $x(0)\in\R^S_{\geq 0}$. Then 
\begin{enumerate}
\item There exists a unique point of detailed balance $x^*\in(x(0)+H_R)\cap\R^S_{> 0}$.  
\item $\frac{d D(x(t) \| y)}{dt}\leq 0$ with equality iff $x(t)$ is a point of detailed balance.
\item\label{thm:db.3} If $(S,R)$ has no critical siphons then $\lim_{t\to\infty} x(t)=x^*$.
\item\label{thm:db.4} The point $x^*$ is the E-Projection of $y$ to the polytope $(x(0)+H_R)\cap\mathbb{R}^S_{\geq 0}$.
\end{enumerate}
\end{theorem}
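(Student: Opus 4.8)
My plan is to read parts (1) and (4) off Birch's Theorem~\ref{thm:birch} as geometric facts, and to establish (2) and (3) dynamically via a Lyapunov computation and a boundary analysis. Throughout I will write $q$ for the given point of detailed balance (denoted $y$ in the statement), reserving the symbols $y\to y'$ for reactions. First I would identify the detailed balance points geometrically: by Note~\ref{note:dbisaffine} the set $\{\log q' : q'\text{ is a point of detailed balance}\}$ is affine, it passes through $\log q$, and its linear part is cut out by $(y-y')\cdot w=0$ over all reactions, i.e.\ it equals $H_R^\perp$. Hence the detailed balance points form the hypersurface $e^V$ for the affine subspace $V:=\log q+H_R^\perp$, whose orthogonal direction is $(H_R^\perp)^\perp=H_R$, so that the conservation class $(x(0)+H_R)\cap\mathbb{R}^S_{\geq 0}$ is exactly the Birch polytope attached to $V$. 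Part (1) is then Theorem~\ref{thm:birch}(\ref{birch1}): the polytope meets $e^V$ in a unique Birch point $x^*$, and $x^*\in\mathbb{R}^S_{>0}$ since $e^V\subseteq\mathbb{R}^S_{>0}$. Part (4) is Theorem~\ref{thm:birch}(\ref{eproj2}) applied to $\beta=q\in e^V$: the E-projection of $q$ to the conservation class is exactly this $x^*$.

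For part (2) I would differentiate along a mass-action trajectory $\dot x=\sum_{y\to y'\in R} k_{y\to y'}\,x^y\,(y'-y)$. Since $\partial_{x_i}D(x\|q)=\log(x_i/q_i)$, the chain rule gives $\frac{d}{dt}D(x(t)\|q)=\sum_{y\to y'} k_{y\to y'}x^y\,(y'-y)\cdot\log(x/q)$. Pairing each reaction with its reverse and substituting the detailed balance relation $k_{y\to y'}q^y=k_{y'\to y}q^{y'}=:c$, with $u:=x/q$ (coordinatewise) each reversible pair contributes $c\,(u^{y}-u^{y'})\,(\log u^{y'}-\log u^{y})$, which equals $-c\,(a-b)(\log a-\log b)$ for $a=u^{y}$, $b=u^{y'}$. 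By monotonicity of $\log$ each such term is $\le 0$, so $\dot D\le 0$; equality forces $a=b$ for every pair, i.e.\ $(x/q)^{y}=(x/q)^{y'}$, which rearranges to $k_{y\to y'}x^y=k_{y'\to y}x^{y'}$, precisely the condition that $x$ be a point of detailed balance.

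Part (3) is where the real work will lie. The trajectory remains in the compact conservation class and $D(\,\cdot\,\|q)$ is non-increasing by (2), so I would apply LaSalle's invariance principle to confine the $\omega$-limit set to the largest invariant subset of $\{\dot D=0\}$, namely detailed balance points together with possible boundary accumulation. In the interior, part (1) supplies a unique detailed balance point $x^*$, so the crux is to prevent the $\omega$-limit set from touching $\partial\mathbb{R}^S_{\geq 0}$. This is exactly where I would invoke the no-critical-siphon hypothesis: the zero set $\{i:p_i=0\}$ of any boundary $\omega$-limit point $p$ is forced to be a siphon, and the relation $x^*-p\in H_R$ with $x^*\in\mathbb{R}^S_{>0}$ would exhibit it as a \emph{critical} siphon, contradicting the hypothesis; this is the persistence half of the Global Attractor argument~\cite{Angeli2007598,Manoj_2011Catalysis,horn74dynamics,GeoGac}. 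With the boundary excluded, the $\omega$-limit set is a connected invariant subset of the interior on which $D$ is constant, hence equals $\{x^*\}$, giving $\lim_{t\to\infty}x(t)=x^*$.

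I expect this boundary analysis to be the main obstacle. Parts (1), (2) and (4) amount to a geometric reading of Birch's Theorem plus a one-line Lyapunov differentiation, whereas excluding boundary $\omega$-limit points and thereby upgrading ``the $\omega$-limit set consists of equilibria'' to honest convergence is the delicate step, and is precisely where the absence of critical siphons becomes indispensable.
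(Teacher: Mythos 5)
Your proposal is correct and follows essentially the same route as the paper: parts (1) and (4) are read off from Note~\ref{note:dbisaffine} together with Birch's Theorem~\ref{thm:birch} exactly as the paper does, and part (2) is the same explicit Lyapunov computation, pairing each reaction with its reverse and using the detailed balance relation. The only difference is part (3), where the paper simply cites \cite[Theorem~2]{angeli2007petri}, whereas you sketch the underlying LaSalle-plus-siphon persistence argument that this citation encapsulates.
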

Parts (1), (2), (3) are well-known in the theory of chemical reaction networks. We include the proofs of (1) and (2) for completeness.
\begin{proof}
(1) follows from Note~\ref{note:dbisaffine} and Theorem~\ref{thm:birch}.\ref{birch1}. For (2) by explicit calculation note that 
\[
\frac{dD(x(t)\|y)}{dt} =\sum_{r\rightleftharpoons r' \in \R} (k_{r\to r'} x(t)^r - k_{r'\to r} x(t)^{r'})\log\frac{k_{r'\to r} x(t)^{r'}}{k_{r\to r'} x(t)^r}
\] where each summand is $\leq 0$, hence $dD/dt\leq 0$ with equality iff $x(t)$ is a point of detailed balance. (3) follows from \cite[Theorem~2]{angeli2007petri}. (4) follows from Theorem~\ref{thm:birch}.\ref{eproj2}
\end{proof}

We are now going to describe a reaction network scheme to compute the E-Projection of an arbitrary point in $\mathbb{R}^n_{>0}$  to an arbitrary polytope in $\mathbb{R}^n_{\geq 0}$. Significantly our scheme will only create detailed balanced reaction networks without critical siphons, allowing the use of Theorem~\ref{thm:db}. To show that the reation networks described by this scheme have no critical siphons, we will need a definition and two lemmas which employ concepts from the theory of binomial ideals, and will not be used elsewhere in the paper. A reader who is not particularly concerned about critical siphons can omit these lemmas and jump to Theorem~\ref{thm:eproj}.

Fix a positive integer $n$. A \textbf{sublattice} of $\mathbb{Z}^n$ is a subgroup of the additive group $\mathbb{Z}^n$. It is necessarily a free and finitely generated abelian group, and hence isomorphic to an integer lattice. A sublattice $L\subseteq\mathbb{Z}^n$ is \textbf{saturated} iff for all $k\in\mathbb{Z}\setminus\{0\}$ and $v\in\mathbb{Z}^n$, if $k v\in L$ then $v\in L$. 

\begin{example}
The sublattice $L_1 = \{(x_1,3x_2)\mid x_1,x_2\in\mathbb{Z}\}$ of $\mathbb{Z}^2$ is unsaturated since $(0,3)\in L_1$ but $(0,1)\notin L_1$, whereas $L_2 = \{(x,3x)\mid x \in \mathbb{Z}\}$ is saturated.
\end{example}

\begin{lemma}\label{lem:saturated}
Let $S$ be a finite set, and let $\mathcal{B}\subseteq\mathbb{Z}^S$ be a finite set of integer vectors. Then the reaction network $(S,\mathcal{R_B})$ generated by $\mathcal{B}$ is prime iff the sublattice $L_\mathcal{B}\subseteq\mathbb{Z}^S$ is saturated.
\end{lemma}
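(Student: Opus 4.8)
The plan is to treat $I_{(S,\mathcal{R_B})}$ as a binomial (lattice) ideal and reduce to the classical equivalence between primality of a lattice ideal and saturation of the underlying lattice. Write $L := L_\mathcal{B}$ for the $\mathbb{Z}$-span of $\mathcal{B}$, and introduce the \emph{full} lattice ideal $I_L := \langle x^{u^+} - x^{u^-} \mid u \in L\rangle$, generated by the pure-difference binomials of \emph{every} vector of $L$, not merely those in $\mathcal{B}$. Since each $b_l$ lies in $L$, we have $I_{(S,\mathcal{R_B})} \subseteq I_L$. My first step is to upgrade this to an equality $I_{(S,\mathcal{R_B})} = I_L$; I expect this to be the main obstacle, so I defer it and first dispatch the two primality arguments for $I_L$, which are short.

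For the direction saturated $\Rightarrow$ prime, suppose $L$ is saturated. Then $\mathbb{Z}^S/L$ is torsion-free, hence free, say of rank $m$, and I fix a homomorphism $\pi : \mathbb{Z}^S \to \mathbb{Z}^m$ with $\ker \pi = L$. This induces a ring map $\phi : \mathbb{C}[x_i : i \in S] \to \mathbb{C}[t_1^{\pm 1},\dots,t_m^{\pm 1}]$ into the Laurent ring, $x_i \mapsto t^{\pi(e_i)}$. The target is a domain, so $\ker\phi$ is prime, and it suffices to show $\ker\phi = I_L$. The inclusion $I_L \subseteq \ker\phi$ is immediate. For the reverse I use the standard fact that the kernel of a monomial map is spanned by binomials $x^a - x^b$ with $\pi(a) = \pi(b)$; such a binomial has $a - b \in \ker\pi = L$, and factoring out the common monomial $x^{a\wedge b}$ (componentwise minimum) rewrites it as $x^{a\wedge b}(x^{u^+} - x^{u^-})$ with $u = a-b \in L$, hence in $I_L$. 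Thus $\ker\phi = I_L$ is prime.

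For the contrapositive of the other direction, suppose $L$ is not saturated, and choose $v \in \mathbb{Z}^S \setminus L$ and an integer $k \geq 2$ with $kv \in L$. Because scaling by a positive integer preserves the sign pattern, $(kv)^+ = kv^+$ and $(kv)^- = kv^-$, so the lattice binomial $x^{(kv)^+} - x^{(kv)^-}$ equals $a^k - b^k$ with $a = x^{v^+}$, $b = x^{v^-}$, and this factors as $\prod_{j=0}^{k-1}(a - \zeta^j b)$ for $\zeta = e^{2\pi i/k}$ a primitive $k$-th root of unity. The product lies in $I_L$, but I claim no factor $x^{v^+} - \zeta^j x^{v^-}$ does: grading $\mathbb{C}[x_i : i \in S]$ by the group $\mathbb{Z}^S/L$ via $\deg x_i = [e_i]$ makes $I_L$ homogeneous, while $x^{v^+}$ and $x^{v^-}$ carry distinct degrees $[v^+] \neq [v^-]$ precisely because $v \notin L$; membership of the factor would then force the monomial $x^{v^+}$ into $I_L$, which is impossible since every generator of $I_L$ vanishes at the all-ones point whereas $x^{v^+}$ does not. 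Hence $I_L$ is not prime.

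The remaining and genuinely delicate point is the equality $I_{(S,\mathcal{R_B})} = I_L$. In general the ideal generated only by the generating binomials $x^{b_l^+} - x^{b_l^-}$ can be strictly smaller than the full lattice ideal, the discrepancy living on coordinate subspaces $\{x_i = 0\}$; this is exactly where the statement is at risk and where the work concentrates. The approach I would take is to localize at the product $\prod_{i \in S} x_i$, where the two ideals visibly coincide because every lattice vector is an integer combination of $\mathcal{B}$ and the corresponding Laurent binomial telescopes through the generators, and then to argue that saturation of $L$ forbids any associated prime of $I_{(S,\mathcal{R_B})}$ from containing a variable $x_i$, so that no component is lost on passing back from the localization. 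Making this no-spurious-component step precise — possibly using the specific form the generators $\mathcal{B}$ take in our scheme — is the crux; once it is in place, the two arguments above finish both directions.
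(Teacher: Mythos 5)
Everything you actually prove is correct: both of your arguments for the classical equivalence (the full lattice ideal $I_L=\langle x^{u^+}-x^{u^-}\mid u\in L\rangle$ is prime if and only if $L$ is saturated) are sound, and that classical fact is exactly what the paper's own proof invokes via \cite[Corollary~2.15]{miller2011theory}. Your localization remark is also right, and it already settles the direction ``prime $\Rightarrow$ saturated'': a prime $I_{(S,\mathcal{R_B})}$ contains no monomial (all its elements vanish at the all-ones point), so it remains prime after inverting $\prod_i x_i$, where by your telescoping argument it becomes the Laurent extension of $I_{L_\mathcal{B}}$; your root-of-unity argument, repeated verbatim in the Laurent ring, then forces $\mathbb{Z}^S/L_\mathcal{B}$ to be torsion-free.

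The step you deferred, however, is not merely delicate --- it is false, and with it the direction ``saturated $\Rightarrow$ prime'' of the lemma itself. Take $S=\{X_1,X_2,X_3,X_4\}$ and $\mathcal{B}=\{b_1,b_2\}$ with $b_1=(1,-2,1,0)$, $b_2=(0,1,-2,1)$. Then $L_\mathcal{B}$ is saturated: it equals the integer kernel of $M=\bigl(\begin{smallmatrix}3&2&1&0\\0&1&2&3\end{smallmatrix}\bigr)$, since that kernel is spanned by $b_2$ and $(1,0,-3,2)=b_1+2b_2$, and integer kernels are always saturated. The network $(S,\mathcal{R_B})$ is $X_1+X_3\rightleftharpoons 2X_2$, $X_2+X_4\rightleftharpoons 2X_3$, with associated ideal $I=\langle x_1x_3-x_2^2,\, x_2x_4-x_3^2\rangle$, the classical lattice-basis ideal of the twisted cubic. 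Let $g=x_1x_4-x_2x_3$, the binomial of $b_1+b_2=(1,-1,-1,1)\in L_\mathcal{B}$. The identity $x_2x_3\,g=x_2x_4(x_1x_3-x_2^2)+x_2^2(x_2x_4-x_3^2)$ shows $x_2x_3\,g\in I$; yet $x_2x_3\notin I$ (every element of $I$ vanishes at $(1,1,1,1)$) and $g\notin I$ (every element of $I$ vanishes at $(1,0,0,1)$, where $g=1$). So $I$ is not prime, and $g$ witnesses $I\subsetneq I_{L_\mathcal{B}}$: the component on $\{x_2=x_3=0\}$ is exactly the kind of spurious coordinate-subspace piece you feared, and no ``no-spurious-component'' argument can exclude it, because it is really there. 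Note that this example sits squarely inside the paper's E-projection scheme ($\mathcal{B}$ is a lattice basis of $\ker\mathcal{S}\cap\mathbb{Z}^4$ for the integer sensitivity matrix $\mathcal{S}=M$), and the network has the critical siphon $\{X_2,X_3\}$, so the failure propagates to the proof of Theorem~\ref{thm:eproj}. The paper's own one-line proof contains the same gap you were trying to close: the cited corollary concerns the toric ideal $I_A$ (equivalently the full lattice ideal), and the phrase ``so that $I_A$ becomes the associated ideal $I_{(S,R)}$'' silently asserts precisely the equality $I_{(S,\mathcal{R_B})}=I_{L_\mathcal{B}}$ that fails here. The statement can be repaired by strengthening the hypothesis on $\mathcal{B}$: require not that $\mathcal{B}$ generate the lattice, but that its binomials generate the lattice ideal (i.e., that $\mathcal{B}$ be a Markov basis of $L_\mathcal{B}$); under that hypothesis your first argument applies directly and gives primality.
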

\begin{proof}
This follows from~\cite[Corollary~2.15]{miller2011theory}, taking for $A$ the matrix whose rows form a basis for the sublattice perpendicular to $L_\mathcal{B}$, so that $I_A$ becomes the associated ideal $I_{(S,R)}$. The assumption of saturation is used in identifying the perpendicular to the perpendicular with the original lattice.
\end{proof}

\begin{lemma}\label{lem:prime}
A prime weakly-reversible reaction network has no critical siphons
\end{lemma}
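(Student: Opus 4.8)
The plan is to prove the contrapositive in sharp form: if $(S,R)$ is weakly reversible and prime, then no nonempty siphon $T$ can be critical. Write $U=S\setminus T$ and $I=I_{(S,R)}$. The first step is to reduce $I$ to its lattice ideal. Every generator $x^y-x^{y'}$ vanishes at the all-ones point $\mathbf 1$, so $I$ is proper and, being prime, contains no monomial (a prime containing $x^a$ would contain some $x_i$, impossible since $x_i(\mathbf 1)=1$); hence in the domain $\mathbb{C}[x]/I$ each $\bar x_i\neq 0$, so $\prod_i x_i$ is a nonzerodivisor and $I$ equals its own saturation $I:(\prod_i x_i)^\infty$, which is the lattice ideal $I_L$ of the stoichiometric lattice $L=\mathbb{Z}\langle y'-y\mid y\to y'\in R\rangle$. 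Since $I_L$ is prime, $L$ must be saturated, i.e. $L=H_R\cap\mathbb{Z}^S$; this is the content underlying Lemma~\ref{lem:saturated}.

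Next I would extract the combinatorial consequences of the two hypotheses. Weak reversibility makes each linkage class strongly connected, so along any reaction the reactant complex meets $T$ iff the product complex does; the reactions therefore split into $T$-free ones (both complexes supported in $U$) and $T$-supported ones (both complexes meeting $T$). Criticality of $T$ means there are $x\ge 0$ with zero set exactly $T$ and $p>0$ with $x-p\in H_R$; since $x_i=0$ forces $(x-p)_i=-p_i<0$ for $i\in T$, this is equivalent to the existence of a vector in $H_R$ that is strictly negative on $T$. As $H_R$ is rational and this is an open condition, I can clear denominators and, using $L=H_R\cap\mathbb{Z}^S$, obtain a lattice vector $u\in L$ with $u_i<0$ for all $i\in T$. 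Its associated binomial $x^{u^+}-x^{u^-}$ lies in $I_L=I$; here $u^+$ is supported in $U$, while $u^-$ carries a positive power of every siphon variable.

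The contradiction is then produced by a single explicit point. Consider $z^\star$ with $z^\star_i=0$ for $i\in T$ and $z^\star_i=1$ for $i\in U$. Because $T$ is a siphon, each $T$-supported generator has both of its monomials divisible by some siphon variable and so vanishes at $z^\star$, while every $T$-free generator evaluates to $1-1=0$; hence $z^\star\in V(I)$. On the other hand the binomial $x^{u^+}-x^{u^-}\in I$ takes the value $1-0=1$ at $z^\star$ (its first monomial uses only $U$-variables, its second carries a positive power of a siphon variable). An element of $I$ cannot be nonzero at a point of $V(I)$, so this is absurd, and $T$ cannot be critical. I expect the step demanding the most care to be the reduction $I=I_L$ together with the passage $u\in H_R\cap\mathbb{Z}^S\Rightarrow u\in L$: both rest on primality forcing saturation, and it is exactly here (and nowhere else) that the hypothesis is used. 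Indeed, for a non-prime network such as $2A+B\rightleftharpoons A+C$ one has $I\subsetneq I_L$, the binomial $x^{u^+}-x^{u^-}$ escapes $I$, and the argument correctly fails to derive a contradiction.
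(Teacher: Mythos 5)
Your proof is correct, but it takes a genuinely different route from the paper, which does not give a direct argument at all: it deduces Lemma~\ref{lem:prime} by citing Theorems~4.1 and~5.2 of \cite{Manoj_2011Catalysis}, which analyze critical siphons through the intermediate dynamical notions developed there (drainable and self-replicable siphons). You instead argue entirely inside the coordinate ring. Primality gives that $\prod_i x_i$ is a nonzerodivisor modulo $I$ (no monomial lies in $I$, since every element of $I$ vanishes at $\mathbf{1}$), hence $I = I : (\prod_i x_i)^\infty$, which by the standard Eisenbud--Sturmfels saturation theorem equals the lattice ideal $I_L$ of the lattice $L$ generated by the reaction vectors; primality of $I_L$ then forces $L = H_R\cap\mathbb{Z}^S$. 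Criticality of a nonempty siphon $T$ produces, after rational approximation within $H_R$ and clearing denominators, a vector $u\in L$ with $u_i<0$ for all $i\in T$, and the binomial $x^{u^+}-x^{u^-}\in I$ takes the value $1-0=1$ at the point $z^\star$ (zero on $T$, one off $T$), even though $z^\star\in V(I)$ by your siphon dichotomy --- a contradiction. Your route buys three things: it is self-contained modulo two standard facts about binomial ideals; it proves, rather than re-cites, exactly the direction of Lemma~\ref{lem:saturated} that is needed; and it handles reactions with catalysts correctly, i.e., generators $x^y-x^{y'}$ with common monomial factors, where $I\subsetneq I_L$ can occur --- your closing example $2A+B\rightleftharpoons A+C$ is precisely this phenomenon and correctly locates where non-primality breaks the argument. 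What the paper's citation buys is brevity and access to the stronger persistence-type conclusions of \cite{Manoj_2011Catalysis}. Two small points of care: the lemma must indeed be read as excluding \emph{nonempty} critical siphons, as you do (the empty set satisfies both definitions vacuously); and your one-line justification of the dichotomy (``reactant complex meets $T$ iff product complex does'') invokes strong connectivity of linkage classes, whereas this paper defines weak reversibility by state reachability ($y'\Rightarrow_R y$ for every reaction $y\to y'$) --- the dichotomy still holds under that definition, because absence of the siphon species is forward invariant under $\Rightarrow_R$, so a product complex disjoint from $T$ could never reach a reactant complex meeting $T$.
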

\begin{proof}
Follows from \cite[Theorems~4.1,5.2]{Manoj_2011Catalysis}
\end{proof}\
\textbf{E-Projection reaction network scheme:} Fix a positive integer $n\in\mathbb{Z}_{>0}$. Consider $x_0,y\in \mathbb{R}^n_{>0}$ and an $n$-column \textbf{sensitivity matrix} $\SM$ of integers. Let $H_{x_0}=\{x\in\mathbb{R}^n_{\geq 0} \mid \SM x=\SM x_0\}$. To compute the E-Projection $\hat{x}$ of $y$ to $H_{x_0}$, we first compute a basis $\mathcal{B}=\{b_1,b_2,\dots,b_r\}$ to the sublattice $(\ker \SM)\cap\mathbb{Z}^n$. Using this, we describe a reaction system as follows:
\begin{enumerate}
\item The set of species is $\mathfrak{X}=\{X_1,X_2,\dots,X_n\}$,
\item The set of reactions is $R_\mathcal{B}$,
\item The reaction rates are chosen so that $y$ is a point of detailed balance, i.e.,
$\displaystyle
\frac{k_{b_l^-\to b_l^+}}{k_{b_l^+\to b_l^-}}=y^{b_l}
$ for $l=1$ to $r$, where $b_l^-,b_l^+$ are as in Definition~\ref{def:Bnetwork}
\end{enumerate}
We obtain the following theorem.

\begin{theorem}\label{thm:eproj} 
Let $x(t)$ be a solution to the mass-action equations for the reaction system $(\mathfrak{X},R_\mathcal{B},k)$ described above with $x(0)=x_0$. Then $\hat{x}=\lim_{t\to\infty} x(t)$ exists and equals the E-Projection of $y$ to $H_{x_0}$. 
\end{theorem}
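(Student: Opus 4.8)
The plan is to check that the reaction system $(\mathfrak{X},R_\mathcal{B},k)$ meets every hypothesis of Theorem~\ref{thm:db} with $y$ as its point of detailed balance, and then simply read off the conclusions. First I would confirm detailed balance: for each reaction $b_l^+\rightleftharpoons b_l^-$ the prescribed rates satisfy $k_{b_l^-\to b_l^+}/k_{b_l^+\to b_l^-}=y^{b_l}=y^{b_l^+}/y^{b_l^-}$, so $k_{b_l^+\to b_l^-}\,y^{b_l^+}=k_{b_l^-\to b_l^+}\,y^{b_l^-}$ and $y$ is indeed a point of detailed balance; in particular the network is reversible, hence weakly reversible. Next I would pin down the conservation class. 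The reaction vectors of $R_\mathcal{B}$ are exactly $\pm b_l$, so the stoichiometric subspace $H_{R_\mathcal{B}}$ is $\mathrm{span}_{\mathbb{R}}\mathcal{B}$. Since $\mathcal{B}$ is a lattice basis of $(\ker\SM)\cap\mathbb{Z}^n$ and $\SM$ has integer entries, $\ker\SM$ is a rational subspace whose integer points span it over $\mathbb{R}$; thus $H_{R_\mathcal{B}}=\ker\SM$ and the conservation class $(x_0+H_{R_\mathcal{B}})\cap\mathbb{R}^n_{\geq 0}$ is precisely the polytope $H_{x_0}$.

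The one genuinely new step, and the part I expect to be the crux, is verifying that $(\mathfrak{X},R_\mathcal{B})$ has no critical siphons, since this is what licenses the convergence assertion in Theorem~\ref{thm:db}.\ref{thm:db.3} (whose analytic content is imported from the cited Petri-net results). Here I would chain the two lemmas: by Lemma~\ref{lem:saturated} the network is prime iff $L_\mathcal{B}=(\ker\SM)\cap\mathbb{Z}^n$ is saturated, and saturation is immediate from integrality, because if $kv\in\ker\SM$ with $k\neq 0$ and $v\in\mathbb{Z}^n$ then $\SM v=0$, so $v\in(\ker\SM)\cap\mathbb{Z}^n=L_\mathcal{B}$. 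Primeness together with weak reversibility then yields the absence of critical siphons by Lemma~\ref{lem:prime}.

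It remains to assemble the pieces. Applying Theorem~\ref{thm:db} with $x(0)=x_0$: part~(1) provides a unique point of detailed balance $x^*\in(x_0+H_{R_\mathcal{B}})\cap\mathbb{R}^n_{>0}$; part~\ref{thm:db.3}, now that critical siphons are ruled out, gives $\lim_{t\to\infty}x(t)=x^*$; and part~\ref{thm:db.4} identifies $x^*$ as the E-Projection of $y$ to $(x_0+H_{R_\mathcal{B}})\cap\mathbb{R}^n_{\geq 0}=H_{x_0}$. Hence $\hat x=\lim_{t\to\infty}x(t)$ exists and equals the E-Projection of $y$ to $H_{x_0}$, as claimed. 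The only delicate points are the two integrality arguments above — that the real span of the lattice basis recovers all of $\ker\SM$, and that the same structure forces $L_\mathcal{B}$ to be saturated — both of which rest on $\SM$ being an integer matrix.
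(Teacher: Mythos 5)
Your proposal is correct and follows essentially the same route as the paper: establish the absence of critical siphons via Lemma~\ref{lem:saturated} (saturation of $(\ker\SM)\cap\mathbb{Z}^n$, which you rightly reduce to integrality of $\SM$) and Lemma~\ref{lem:prime}, then invoke Theorem~\ref{thm:db}.\ref{thm:db.3} and \ref{thm:db}.\ref{thm:db.4}. The paper's proof is just a terser version of yours; the details you supply --- verifying detailed balance at $y$, and identifying the conservation class $(x_0+H_{R_\mathcal{B}})\cap\mathbb{R}^n_{\geq 0}$ with $H_{x_0}$ via the rationality of $\ker\SM$ --- are exactly the steps the paper leaves implicit.
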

\begin{proof}
From Lemmas~\ref{lem:saturated} and \ref{lem:prime}, the reaction network $(S,R_\mathcal{B})$ has no critical siphons. From Theorem~\ref{thm:db}.\ref{thm:db.3} and \ref{thm:db}.\ref{thm:db.4}, the result follows.
\end{proof}

\begin{example}[contd. from Example~\ref{ex:running}]\label{ex:running2}
For the three sided die, $\SM=\left(\begin{array}{rrr}1&1&1\\1&1&0\end{array}\right)$ where the first row represents $s_1$, the total number of times the die is rolled by the referee, and the second row represents $s_2$, the total number of times the die comes up either $X_1$ or $X_2$. The vector $\left(\begin{array}{r}1\\-1\\0\end{array}\right)$ is a basis for $\ker\SM$. The corresponding E-Projection reaction network is $X_1  \rightleftharpoons  X_2$. If $y=(1/3,1/3,1/3)$ represents our prior belief about the die, i.e., that it is a fair die and all three outcomes are equally likely, then we can set all reaction rates to $1$, and concentrations evolve according to the differential equations $\dot{x}_1 = -\dot{x}_2 = x_2 -  x_1,\,\dot{x}_3=0$. The derivative $d D(x(t) \| y)/dt=(x_1 - x_2)\log(x_2 /x_1) \leq 0$, showing that the dynamics is moving the system towards the E-Projection. If $x(0) = (2,20,27)$ then the system reaches equilibrium at  $x_1 = 11$, $x_2 = 11$, and $x_3 = 27$ which is the E-Projection of $(1/3,1/3,1/3)$ to $H_{x_0}=\{x\mid \SM x = \SM x_0\}$. This is also the most likely outcome corresponding to the observations $s_1=49,s_2=22$.
\end{example}

\subsection{Reaction networks compute M-Projections}\label{subsec:mproj}
\textbf{M-Projection Reaction Network Scheme:} Fix positive integers $m,n\in\mathbb{Z}_{>0}$. Consider $x\in \mathbb{R}^n_{>0}$, and a matrix $A=(a_{ij})_{m\times n}$ of {\em nonnegative} integers. Let $\operatorname{Col}(A)=\{a_{.1},a_{.2},\dots,a_{.n}\}$ denote the columns of A. Fix a map $y_A:\mathbb{R}^m\to\mathbb{R}^n_{>0}$ sending $\theta\longmapsto (c_1\theta^{a_{.1}},c_2\theta^{a_{.2}},\dots,c_n\theta^{a_{.n}})$ where $c_1,c_2,\dots,c_n\in\mathbb{R}_{>0}$. To compute the M-Projection $\hat{y}$ of $x$ to $y_A(\mathbb{R}^m)$, we describe a reaction system as follows:
\begin{enumerate}
\item The set of species is $\Theta=\{\theta_1,\theta_2,\dots,\theta_m\}$,
\item The set of reactions is $R_{\operatorname{Col}(A)}=\{0\rightleftharpoons a_{.1},0\rightleftharpoons a_{.2},\dots,0\rightleftharpoons a_{.n}\}$,
\item The reaction rates are chosen so that $\frac{k_{0\to a_{.j}}}{k_{a_{.j}\to 0}} = \frac{x_j}{c_j}$ for $j=1$ to $n$.
\end{enumerate}
We obtain the following theorem.
\begin{theorem}\label{thm:mproj}
Let $\theta(t)$ be a solution to the mass-action equations for the reaction system $(\Theta,R_\mathcal{B},k)$ described above. Then
\begin{enumerate}
\item $\dot{\theta} = A(x - y_A\circ\theta(t))$.
\item $\frac{\dot{\theta_i}}{\theta_i(t)} = -\frac{\partial D(x \| y_A(\theta)}{\partial\theta_i}|_{\theta=\theta(t)}$.
\item $\frac{dD(x \| y_A\circ\theta(t))}{dt}\leq 0$ with equality iff $A(x - y_A\circ\theta(t))=0$.
\item The limit $\hat{\theta}=\lim_{t\to\infty} \theta(t)\in\mathbb{R}^m_{\geq 0}$ exists and $y_A(\hat{\theta})=\hat{y}$ is the M-projection of $x$ to $y_A(\mathbb{R}^m)$.
\end{enumerate}
\end{theorem}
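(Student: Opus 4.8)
\section*{Proof proposal}

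The plan is to dispatch parts (1)--(3) by direct computation and then treat the convergence claim (4) with a Lyapunov/LaSalle argument built on Birch's theorem. For (1), I would write down the mass-action ODE for the reversible inflow--outflow reactions $0\rightleftharpoons a_{.j}$. The reaction $0\to a_{.j}$ fires at the constant rate $k_{0\to a_{.j}}$ and adds $a_{ij}$ copies of $\theta_i$, while $a_{.j}\to 0$ fires at rate $k_{a_{.j}\to 0}\theta^{a_{.j}}$ and removes them, so $\dot\theta_i=\sum_{j}a_{ij}\bigl(k_{0\to a_{.j}}-k_{a_{.j}\to 0}\theta^{a_{.j}}\bigr)$. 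Taking $k_{a_{.j}\to 0}=c_j$ (hence $k_{0\to a_{.j}}=x_j$ by the prescribed ratio) turns the bracket into $x_j-c_j\theta^{a_{.j}}=x_j-(y_A(\theta))_j$, giving $\dot\theta=A\bigl(x-y_A\circ\theta\bigr)$. For (2), differentiating $D(x\|y_A(\theta))$ and using $\partial_{\theta_i}\log(y_A(\theta))_j=a_{ij}/\theta_i$ together with $\partial_{\theta_i}(y_A(\theta))_j=a_{ij}(y_A(\theta))_j/\theta_i$ collapses the sum to $\partial_{\theta_i}D=-\tfrac{1}{\theta_i}\bigl(A(x-y_A(\theta))\bigr)_i$, which equals $-\dot\theta_i/\theta_i$ by (1). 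Part (3) is then the chain rule: $\frac{d}{dt}D(x\|y_A\circ\theta)=\sum_i(\partial_{\theta_i}D)\dot\theta_i=-\sum_i\dot\theta_i^2/\theta_i\le 0$, vanishing exactly when $\dot\theta=0$, i.e.\ when $A(x-y_A(\theta))=0$. Positivity of the $\theta_i(t)$ along the trajectory is needed here and follows from the standard invariance of the open orthant under mass-action flow.

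For (4) I would first identify the target geometry. Writing $u=\log\theta$ gives $\log y_A(\theta)=\log c+A^{\top}u$, so $y_A(\mathbb{R}^m_{>0})=e^{V}$ with $V=\log c+\operatorname{rowspace}(A)$ an affine subspace; Theorem~\ref{thm:birch} then supplies a unique M-projection $\hat y$ of $x$ to this exponential family. I would next show that any equilibrium $\theta^\ast\in\mathbb{R}^m_{>0}$ maps to $\hat y$: at an equilibrium $x-y_A(\theta^\ast)\in\ker A=(\operatorname{rowspace}A)^\perp$, while for any other $y'=y_A(\theta')\in e^V$ one has $\log y_A(\theta^\ast)-\log y'\in\operatorname{rowspace}(A)$, so the two vectors are orthogonal and the Pythagorean Theorem~\ref{thm:pyth} gives $D(x\|y')=D(x\|y_A(\theta^\ast))+D(y_A(\theta^\ast)\|y')\ge D(x\|y_A(\theta^\ast))$; hence $y_A(\theta^\ast)=\hat y$.

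The heart of the argument, and the step I expect to be hardest, is upgrading ``$y_A(\theta(t))$ approaches $\hat y$'' to genuine convergence of $\theta(t)$ to a single point. I would use $g(\theta):=D(x\|y_A(\theta))$ as a Lyapunov function (monotone by (3), bounded below by $0$) and apply LaSalle's invariance principle on the stoichiometric compatibility class $\bigl(\theta(0)+\operatorname{colspace}(A)\bigr)\cap\mathbb{R}^m_{>0}$, to which the trajectory is confined since $\dot\theta\in\operatorname{colspace}(A)$. Two technical points must be settled. First, compactness: $g$ is convex and coercive in the coordinates $s=A^{\top}\log\theta$ (because $c_je^{s_j}-x_js_j\to\infty$ as $s_j\to\pm\infty$, using $x_j>0$), but it is constant along the fibers $\log\theta\in\ker A^{\top}$, so I must show the sublevel set $\{g\le g(\theta(0))\}$ meets the compatibility slice in a compact subset of the open orthant, ruling out escape to infinity or to the boundary. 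Second, uniqueness of the equilibrium in the slice: if $\theta_1,\theta_2$ are two equilibria with $\theta_1-\theta_2\in\operatorname{colspace}(A)$, the previous paragraph forces $y_A(\theta_1)=y_A(\theta_2)=\hat y$, whence $\log\theta_1-\log\theta_2\in\ker A^{\top}=(\operatorname{colspace}A)^\perp$; then $(\theta_1-\theta_2)\cdot(\log\theta_1-\log\theta_2)=0$, and since each summand $(\theta_{1i}-\theta_{2i})(\log\theta_{1i}-\log\theta_{2i})$ is nonnegative and vanishes only when $\theta_{1i}=\theta_{2i}$, we conclude $\theta_1=\theta_2$. This is a Birch-type transversality between $\operatorname{colspace}(A)$ and $\ker A^{\top}$ and is the crux of the matter; granting compactness, LaSalle then forces $\theta(t)\to\hat\theta$, the unique equilibrium, and $y_A(\hat\theta)=\hat y$ completes the proof.
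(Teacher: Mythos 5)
Your parts (1)--(3) coincide with the paper's proof: the same explicit computation of the mass-action equations (including the normalization $k_{a_{.j}\to 0}=c_j$, $k_{0\to a_{.j}}=x_j$, which the paper leaves implicit but which is needed for the clean identity $\dot\theta=A(x-y_A\circ\theta)$), the same gradient identity, and the same chain-rule computation $\frac{d}{dt}D=-\sum_i\dot\theta_i^2/\theta_i\le 0$. The divergence is in part (4). The paper's entire argument there is that the limit exists ``because $D$ is decreasing in time, and bounded from below, and $\dot D=0$ implies $\dot\theta=0$,'' followed by the observation that a positive equilibrium is the Birch point of $x$ relative to $\log y_A(\mathbb{R}^m)$, hence the M-projection by Theorem~\ref{thm:birch}.\ref{mproj3}. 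Your identification of positive equilibria with the M-projection via the Pythagorean theorem is just that citation unrolled, so the two proofs agree on that point. Where you genuinely go beyond the paper is in recognizing that a monotone, bounded-below Lyapunov function does not by itself force $\theta(t)$ to converge to a point: one needs precompactness of the trajectory inside the open orthant and uniqueness of the equilibrium in the compatibility class, and then LaSalle. Your uniqueness argument (pairing $\theta_1-\theta_2\in\operatorname{colspace}(A)$ against $\log\theta_1-\log\theta_2\in\ker A^\top$) is complete and correct; it is the transversality trick of Theorem~\ref{thm:birch}.\ref{birch1} transplanted to $\theta$-space. So your route is the paper's route made rigorous.

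The one open item in your write-up is the compactness claim you flag: that the sublevel set $\{g\le g(\theta(0))\}$ meets $(\theta(0)+\operatorname{colspace}(A))\cap\mathbb{R}^m_{>0}$ in a compact set. This does not follow from coercivity in $s=A^\top\log\theta$ alone (as you note, $g$ is constant along the fibers $\log\theta\in\ker A^\top$); what saves you is the nonnegativity of $A$, and the argument is short enough that you should include it. Boundedness: if $\theta^{(k)}$ lies in the slice and the sublevel set with $\lVert\theta^{(k)}\rVert\to\infty$, pass to a subsequence with $\theta^{(k)}/\lVert\theta^{(k)}\rVert\to d$; then $d\ge 0$, $d\ne 0$, and $d\in\operatorname{colspace}(A)$, say $d=A\lambda$. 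The sublevel constraint keeps $A^\top\log\theta^{(k)}$ in a compact set, hence $d\cdot\log\theta^{(k)}=\lambda\cdot A^\top\log\theta^{(k)}$ is bounded; but $d\cdot\log\theta^{(k)}=\sum_{i:d_i>0}d_i\log\theta_i^{(k)}\to+\infty$ because $\theta_i^{(k)}\to\infty$ whenever $d_i>0$, a contradiction. No escape to the boundary: if $\theta^{(k)}\to\theta^*$ with $\theta^*_i=0$ exactly for $i$ in a nonempty set $I_0$, then in $(A^\top\log\theta^{(k)})_j=\sum_i a_{ij}\log\theta_i^{(k)}$ the terms with $i\notin I_0$ stay bounded while each term with $i\in I_0$ and $a_{ij}>0$ tends to $-\infty$; since all $a_{ij}\ge 0$ there is no cancellation, so boundedness forces $a_{ij}=0$ for all $i\in I_0$ and all $j$. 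But then every vector of $\operatorname{colspace}(A)$ vanishes on the $I_0$ coordinates, so $\theta^*_i=\theta_i(0)>0$ for $i\in I_0$, a contradiction. With this lemma in place your LaSalle argument closes, and your proof of (4) is in fact more complete than the one in the paper.
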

\begin{proof}
(1) and (2) are easily verified by explicit calculation. (3) follows from (2) by the chain rule, since $\frac{dD(x\|y_A\circ\theta(t))}{dt}=\sum_{i=1}^m \frac{ \partial D(x\|y_A(\theta))}{\partial\theta_i}|_{\theta(t)}\cdot \dot{\theta}_i$\\ $=-\sum_{i=1}^m \frac{1}{\theta_i(t)}\left(\frac{ \partial D(x\|y_A(\theta))}{\partial\theta_i}|_{\theta(t)}\right)^2\leq 0$. Equality implies $\frac{\dot{\theta_i}}{\theta_i(t)}=0$ for all $i$, hence by (1) we have $\dot{\theta}=A(x-y_A\circ\theta(t))=0$. To see (4), note that the limit exists because $D$ is decreasing in time, and bounded from below, and $\dot{D}=0$ implies $\dot{\theta}=0$. The limit point $\hat{\theta}$ is the M-projection because $A(x-y_A(\hat{\theta}))=0$ implies $\hat{\theta}$ is the Birch point of $x$ relative to $\log (y_A(\mathbb{R}^m))$, from Theorem~\ref{thm:birch}.
\end{proof}

\begin{example}[contd. from Example~\ref{ex:running2}]\label{ex:running3}
For the three sided die, the design matrix is $A=\left(\begin{array}{ccc}2&1&0\\0&1&2\end{array}\right)$, and $y_A(\theta_1,\theta_2) = (\theta_1^2,\theta_1\theta_2,\theta_2^2)$. The corresponding  network is:
\begin{align*}
0&\xrightleftharpoons[1]{x_1}2\theta_1  & 0&\xrightleftharpoons[1]{x_2}\theta_1+\theta_2 &0&\xrightleftharpoons[1]{x_3} 2\theta_2  
\end{align*}
Suppose the die was rolled $49$ times and the outcomes were $x_1 =11$, $x_2 =11$ and $x_3 =27$ respectively. We get the differential equations
$
\dot{\theta_1} = 2(11 - \theta_1^2) + (11 - \theta_1\theta_2),\,\,\dot{\theta_2} = 2(27 - \theta_2^2) + (11 - \theta_1\theta_2)$
The derivative $d D(x(t) \| y)/dt=-\left(\dot{\theta_1}^2/\theta_1 + \dot{\theta_2}^2/\theta_2\right) \leq 0$  The system is stationary (but not detailed balanced) at $\hat{\theta}_1 = 3$ and $\hat{\theta}_2 = 5$. The M-Projection point is $(9,15,25)$, and $A y_A(\hat{\theta}) = Ax$.
\end{example}

\subsection{Reaction Networks implement a generalized EM algorithm}\label{subsec:em}
\textbf{EM Reaction Network Scheme:} Fix positive integers $m,n\in\mathbb{Z}_{>0}$. Consider $x_0\in \mathbb{R}^n_{>0}$ and an $n$-column matrix $\SM=(s_{ij})$ of integers. Let $H_{x_0}=\{x\in\mathbb{R}^n_{\geq 0} \mid \SM x=\SM x_0\}$.  

Fix a matrix $A=(a_{ij})_{m\times n}$ of {\em nonnegative} integers. Let $\operatorname{Col}(A)=\{a_{.1},a_{.2},\dots,a_{.n}\}$ denote the columns of A. Fix a map $y_A:\mathbb{R}^m\to\mathbb{R}^n_{>0}$ sending $\theta\longmapsto (c_1\theta^{a_{.1}},c_2\theta^{a_{.2}},\dots,c_n\theta^{a_{.n}})$ where $c_1,c_2,\dots,c_n\in\mathbb{R}_{>0}$.

To compute $(\hat x,\hat\theta)$ which is a local minimum of $D(x \| y_A(\theta))$ when $x\in H_{x_0}$, we first compute a basis $\mathcal{B}=\{b_1,b_2,\dots,b_r\}$ to the sublattice $(\ker \SM)\cap\mathbb{Z}^n$. Using this, we describe a reaction system $\operatorname{EM}(A,\mathcal{B})$:
\begin{enumerate}
\item The set of species is $S=\mathfrak{X}\cup\Theta$ where $\mathfrak{X}=\{X_1,X_2,\dots,X_n\}$ and $\Theta=\{\theta_1,\theta_2,\dots,\theta_m\}$,

\item The reactions with rates are 
\begin{align*}
&X_j \xrightarrow{k_{a_{.j}}^+} X_j + \sum_{i=1}^m a_{ij}\theta_i\text{ and }\sum_{i=1}^m a_{ij}\theta_i\xrightarrow{k_{a_{.j}}^-}0&\text{ for }j=1\text{ to }n\
\\&\begin{rcases}
\displaystyle \sum_{j: b_{lj} >0} b_{lj} X_j + \sum_{i=1}^m d_{il}\theta_i\xrightarrow{k_l^+} \sum_{j: b_{lj} <0} -b_{lj} X_j + \sum_{i=1}^m d_{il}\theta_i\
\\\displaystyle\sum_{j: b_{lj}<0} -b_{lj} X_j + \sum_{i=1}^m e_{il}\theta_i\xrightarrow{k_l^-} \sum_{j: b_{lj}>0} b_{lj} X_j + \sum_{i=1}^m e_{il}\theta_i\
\end{rcases}&\text{ for $l=1$ to $r$}
\end{align*}

\item The reaction rates $k_{a_{.j}}^+$ and $k_{a_{.j}}^-$ are chosen so that
$k_{a_{.j}}^- = c_jk_{a_{.j}}^+$ for $j=1$ to $n$. A special choice is $k_{a_{.j}}^+=1$ and $k_{a_{.j}}^-=c_j$. The reaction rates $k_l^+,k_l^-$ and the stoichiometric coefficients $d_{il},e_{il}$ for the  reactions are chosen so that 
$
\displaystyle\frac{k_l^-}{k_l^+}\prod_{i=1}^m \theta_i^{e_{il}-d_{il}}=y(\theta)^{b_l}
$ for $l=1$ to $r$. A special choice is $k_l^-=y(1)^{b_l},k_l^+=1,e_{il} = \begin{cases}
a_{i.}\cdot b_l\text{ if }a_{i.}\cdot b_l>0\
\\0\text{ otherwise,}
\end{cases}$ and $d_{il} = \begin{cases}
-a_{i.}\cdot b_l\text{ if }a_{i.}\cdot b_l<0\
\\0\text{ otherwise,}
\end{cases}$ for $i=1$ to $m$, where $a_{i.}=(a_{i1},a_{i2},\dots,a_{in})$ is the $i$'th row of $A$.
\end{enumerate}

We obtain the following theorem.
\begin{theorem}\label{thm:em}
Let $(x(t),\theta(t))$ be a solution to the mass-action equations for the reaction system $\operatorname{EM}(A,\mathcal{B})$ described above with initial condition $(x(0),\theta(0))\in\mathbb{R}^\mathfrak{X}_{>0}\times\mathbb{R}^\Theta_{> 0}$. Then
\begin{enumerate}
\item $\frac{dD(x(t) \| y_A\circ\theta(t))}{dt}\leq 0$ with equality iff both 
	$x(t)$ is the E-Projection of $y_A\circ\theta(t)$ to $H_{x_0}$ and 
	$y_A\circ\theta(t)$ is the M-Projection of $x(t)$ to $y_A(\mathbb{R}^m)$.
\item The limit $(\hat{x},\hat{\theta})=\lim_{t\to\infty}(x(t), \theta(t))$ exists.
\item $\nabla_\theta D(x\|y_A(\theta))|_{\hat x,\hat\theta} = 0$ if $\hat\theta\in\mathbb{R}^\Theta_{>0}$.
\item $\nabla_x D(x\|y_A(\theta))|_{\hat{x},\hat\theta}$ is perpendicular to the stoichiometric subspace $H_{R_\mathcal{B}}$
\end{enumerate}
\end{theorem}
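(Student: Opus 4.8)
The plan is to exploit the catalytic decoupling built into the scheme. The reactions indexed by $j$ change only the $\theta$ species (each $X_j$ appears on both sides and is catalytic), while the reactions indexed by $l$ change only the $X$ species (the $\theta_i$ are catalytic). First I would write down the mass-action vector field and confirm this decoupling directly: $\dot\theta$ collects contributions only from the first family and $\dot x$ only from the second. Writing $v_l^+ = k_l^+ x^{b_l^+}\theta^{d_{.l}}$ and $v_l^- = k_l^- x^{b_l^-}\theta^{e_{.l}}$ for the forward and reverse rates of the $l$-th reaction pair, a short computation using the rate choice $k_{a_{.j}}^- = c_j k_{a_{.j}}^+$ gives $\dot\theta = A(x - y_A(\theta))$, which is exactly the M-projection field of Theorem~\ref{thm:mproj}, and gives $\dot x = \sum_{l=1}^r b_l\,(v_l^- - v_l^+)$, so that $\dot x \in \operatorname{span}\{b_1,\dots,b_r\} = H_{R_\mathcal{B}}$ and hence $\SM\dot x = 0$.

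For part (1) I would apply the chain rule $\frac{dD}{dt} = \nabla_x D\cdot\dot x + \nabla_\theta D\cdot\dot\theta$ and bound the two summands separately. The $\theta$-summand is handled verbatim as in Theorem~\ref{thm:mproj}: since $\dot\theta_i/\theta_i = -\partial_{\theta_i}D$, it equals $-\sum_i \dot\theta_i^2/\theta_i \le 0$, vanishing iff $A(x - y_A(\theta)) = 0$, which by the Birch characterization in Theorem~\ref{thm:birch} is precisely the condition that $y_A(\theta)$ is the M-projection of $x$ to $y_A(\mathbb{R}^m)$. For the $x$-summand the crucial input is the rate condition $\frac{k_l^-}{k_l^+}\theta^{e_{.l}-d_{.l}} = y_A(\theta)^{b_l}$, from which I would derive the identity $v_l^+/v_l^- = x^{b_l}/y_A(\theta)^{b_l}$. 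Since $\nabla_x D = \log x - \log y_A(\theta)$, we get $b_l\cdot\nabla_x D = \log\bigl(x^{b_l}/y_A(\theta)^{b_l}\bigr) = \log(v_l^+/v_l^-)$, so the $x$-summand becomes $\sum_l (v_l^- - v_l^+)\log(v_l^+/v_l^-) \le 0$, vanishing iff $v_l^+ = v_l^-$ for every $l$. This is exactly the detailed-balance condition for the E-subnetwork at the point $y_A(\theta)$, which by Theorem~\ref{thm:db} characterizes $x$ as the E-projection of $y_A(\theta)$ to $H_{x_0}$. Summing the two nonpositive contributions yields the inequality together with its stated equality case.

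Parts (3) and (4) then follow by reading off the equilibrium conditions at the limit point. At $(\hat x,\hat\theta)$ the vector field vanishes; from $\dot\theta = 0$ together with $\hat\theta\in\mathbb{R}^\Theta_{>0}$ and the relation $\partial_{\theta_i}D = -\dot\theta_i/\theta_i$ we obtain $\nabla_\theta D|_{\hat x,\hat\theta} = 0$, which is part (3). From $\dot x = 0$ we obtain $v_l^+ = v_l^-$ for every $l$, hence $b_l\cdot\nabla_x D|_{\hat x,\hat\theta} = \log(v_l^+/v_l^-) = 0$ for all $l$, so $\nabla_x D|_{\hat x,\hat\theta}$ is orthogonal to $\operatorname{span}\{b_l\} = H_{R_\mathcal{B}}$, which is part (4).

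The main obstacle is part (2), the existence of the limit. The Lyapunov inequality of part (1) only shows that $t\mapsto D(x(t)\|y_A\circ\theta(t))$ is nonincreasing and bounded below by $0$, which gives convergence of the value but not of the trajectory. The plan is a LaSalle-type argument: since $\SM\dot x = 0$, the trajectory $x(t)$ stays in the compact polytope $H_{x_0}$, and I would argue boundedness of $\theta(t)$ (away from both $0$ and $\infty$ in the relevant coordinates) from the coercivity of $\theta\mapsto D(x\|y_A(\theta))$ used in the proof of Theorem~\ref{thm:mproj}, so that the orbit is precompact; its $\omega$-limit set then lies in the invariant set $\{dD/dt = 0\}$, where both projection conditions hold simultaneously. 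The delicate step I expect to wrestle with is upgrading convergence to this equilibrium set into convergence to a single point $(\hat x,\hat\theta)$: unlike the single-projection Theorems~\ref{thm:eproj} and~\ref{thm:mproj}, here the equilibrium set is cut out by two coupled fixed-point conditions, and pinning the $\omega$-limit set down to one point does not follow from the Lyapunov inequality alone but needs an additional rigidity or isolation argument for the joint equilibria.
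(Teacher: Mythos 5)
Your parts (1), (3) and (4) are correct. For (1) and (3) you follow essentially the paper's route: the chain-rule split $\dot D = \nabla_x D\cdot\dot x + \nabla_\theta D\cdot\dot\theta$, with the $\theta$-summand handled exactly as in Theorem~\ref{thm:mproj}; your treatment of the $x$-summand simply unpacks what the paper delegates to Theorems~\ref{thm:eproj} and~\ref{thm:db}, namely the detailed-balance computation $\sum_l (v_l^- - v_l^+)\log(v_l^+/v_l^-)\le 0$, with the rate condition giving $v_l^+/v_l^- = x^{b_l}/y_A(\theta)^{b_l}$. For (4) your route is genuinely different and cleaner: the paper deduces perpendicularity by invoking the Birch-point characterization of Theorem~\ref{thm:birch}, whereas you read it off directly from $b_l\cdot\nabla_x D = \log(v_l^+/v_l^-) = 0$ at the limit; this buys you independence from the information-geometry machinery at the cost of nothing. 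One small repair: to get $v_l^+=v_l^-$ for each $l$ from $\dot x=0$ you should either invoke the linear independence of the basis $\mathcal{B}$, or, more simply, use the equality case of your part (1), where each nonpositive summand must vanish separately.

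On part (2) you have identified a genuine gap, but you should know it is a gap in the paper's own proof as well, not a defect peculiar to your attempt. The paper argues that since $D$ is bounded below and nonincreasing, ``eventually $dD/dt=0$,'' and concludes $\dot x = \dot\theta = 0$; as stated this is not a valid deduction (a bounded nonincreasing function need not have its derivative vanish in finite time, nor even tend to zero without a uniform-continuity or Barbalat-type hypothesis, and $\dot D\to 0$ by itself does not force the trajectory to converge to a single point). Your LaSalle plan, precompactness of the orbit ($x(t)$ stays in the compact polytope $H_{x_0}$ because $\SM\dot x=0$, plus a confinement argument for $\theta(t)$) followed by inclusion of the omega-limit set in the equilibria, is the standard repair, and the step you flag as delicate is exactly what remains: pinning the connected omega-limit set to a single equilibrium, for instance by showing the equilibria it meets are isolated (they can be several, as Example~\ref{ex:multipleequilibria} shows) or by a {\L}ojasiewicz-type gradient argument, which is available in principle since the vector field and $D$ are real-analytic. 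Two further cautions for your plan: boundedness of $\theta(t)$ does not follow immediately from sublevel sets of $D$, since $y_A$ need not be a proper map; and the orbit may approach the boundary of the orthant (Example~\ref{ex:extinction}), which is why part (3) of the theorem carries the hypothesis $\hat\theta\in\mathbb{R}^\Theta_{>0}$. So: your write-up matches or improves on the paper for (1), (3), (4), and your honesty about (2) points at a real soft spot in the published argument rather than a missing idea you alone failed to find.
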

\begin{proof}
(1) From the chain rule, $\dot{D}(x(t) \| y_A\circ\theta(t)) = ( \nabla_x D\cdot \dot{x} + \nabla_\theta D\cdot\dot{\theta})|_{(x(t),\theta(t))}$. From Theorem ~\ref{thm:eproj}, the first term is nonpositive with equality iff $x(t)$ is the E-Projection of $y(\theta(t))$ onto $H_{x_0}$. From Theorem ~\ref{thm:mproj}, the second term is nonpositive with equality iff $y(\theta (t))$ is the M-Projection of $x(t)$ onto $y_A(\mathbb{R}^m)$. Hence $dD(x(t) \| y_A\circ\theta(t))/dt \leq 0$ with equality iff both $x(t)$ is the E-Projection of $y_A\circ\theta(t)$ to $H_{x_0}$ and $y_A\circ\theta(t)$ is the M-Projection of $x(t)$ to $y_A(\mathbb{R}^m)$.\
\\(2) Since $D(x(t) \| y_A\circ\theta(t))$ has a lower bound and $dD(x(t) \| y_A\circ\theta(t))/dt \leq 0$, eventually $dD(x(t) \| y_A\circ\theta(t))/dt = 0$ at which point, by the above argument, both the E-Projection and M-Projection subnetworks are stationary, so that $\dot{x}=0$ and $\dot\theta=0$. Hence the limit $(\hat{x},\hat{\theta})=\lim_{t\to\infty}(x(t), \theta(t))$ exists.\ 
\\(3) follows since  $\nabla_\theta D(x\|y_A(\theta))|_{\hat x,\hat\theta} = \dot{\theta}(t)/\theta(t)|_{\hat x,\hat{\theta}}=0$ when $\hat\theta\in\mathbb{R}^\Theta_{>0}$.\ 
\\(4) $\nabla_x D(x\|y_A(\theta))|_{\hat{x},\hat\theta} = \log \left(\frac{\hat{x}}{y_A(\hat\theta)}\right)$. By (1), the point $\hat{x}$ is the E-Projection of $y_A(\hat\theta)$ to $H_{x_0}$. Hence by Theorem~\ref{thm:birch}, the point $\hat{x}$ is the Birch point of $x_0$ relative to the affine space $\log y_A(\mathbb{R}^m)$, so that $(x - \hat x)\log\left(\frac{\hat{x}}{y_A(\hat\theta)}\right)=0$ for all $x\in H_{x_0}$. Hence the gradient $\nabla_x D(x\|y_A(\theta))|_{\hat{x},\hat\theta}$ is perpendicular to $H_{R_\mathcal{B}}$.
\end{proof}

\begin{example}[contd. from Example~\ref{ex:running3}]
For the three sided die, the design matrix is $A=\left(\begin{array}{ccc}2&1&0\\0&1&2\end{array}\right)$, $y_A(\theta_1,\theta_2) = (\theta_1^2,\theta_1\theta_2,\theta_2^2)$, $\SM=\left(\begin{array}{rrr}1&1&1\\1&1&0\end{array}\right)$ with basis $\left(\begin{array}{r}1\\-1\\0\end{array}\right)$ for $\ker\SM$. The corresponding EM reaction network is:
\begin{align*}
X_1&\to X_1 + 2\theta_1 &2\theta_1&\to 0 &X_2&\to X_2+\theta_1 + \theta_2 &\theta_1 + \theta_2&\to 0\
\\X_3&\to X_3+2\theta_2 & 2\theta_2&\to 0 &X_1 + \theta_2&\to X_2 + \theta_2 &X_2 + \theta_1&\to X_1+\theta_1
\end{align*}
With all reaction rates set to $1$, the concentrations evolve according to:  
\begin{align*}
\dot{\theta_1} &= 2(x_1 - \theta_1^2) + (x_2 - \theta_1\theta_2) &\dot{\theta_2} &= 2(x_3 - \theta_2^2) + (x_2 - \theta_1\theta_2)\\
\dot{x}_1 &= -\dot{x}_2 = \theta_1 x_2 - \theta_2 x_1  &\dot{x_3} &= 0
\end{align*}
The derivative $d D(x(t) \| y_A\circ\theta(t))/dt= (x_1 - x_2)\log(x_2/x_1)  - \dot{\theta}_1^2/\theta_1 -\dot{\theta}_2^2/\theta_2 \leq 0$. 

If $x(0) = (1,23,25)$ then irrespective of $\theta(0)$, the system reaches equilibrium at $
\hat\theta_1 = 3$, $\hat\theta_2 =5$, $\hat x_1 = 9$, $\hat x_2 = 15,\hat x_3 = 25$. The MLE $\hat\theta=\arg\sup_{\theta} \Pr [s_1, s_2 \mid \theta ] = \arg\sup_{\theta} \sum_{i =0}^{24}{\binom{49}{i,24-i,25}}
\theta_1^{24+i}\theta_2^{74-i} $ when maximized analytically through gradient 
descent converges to 
$\theta_1 = 0.42007781$ and $\theta_2 = 0.70013016$, which is proportional to 
$(3,5)$ upto numerical error. Since the likelihood doesn't change when the parameters are 
multiplied by the same factor, in this example the EM CRN has indeed found the MLE. The following graph shows how concentrations change through time to approach the steady state.
\[
	\includegraphics[width=5in]{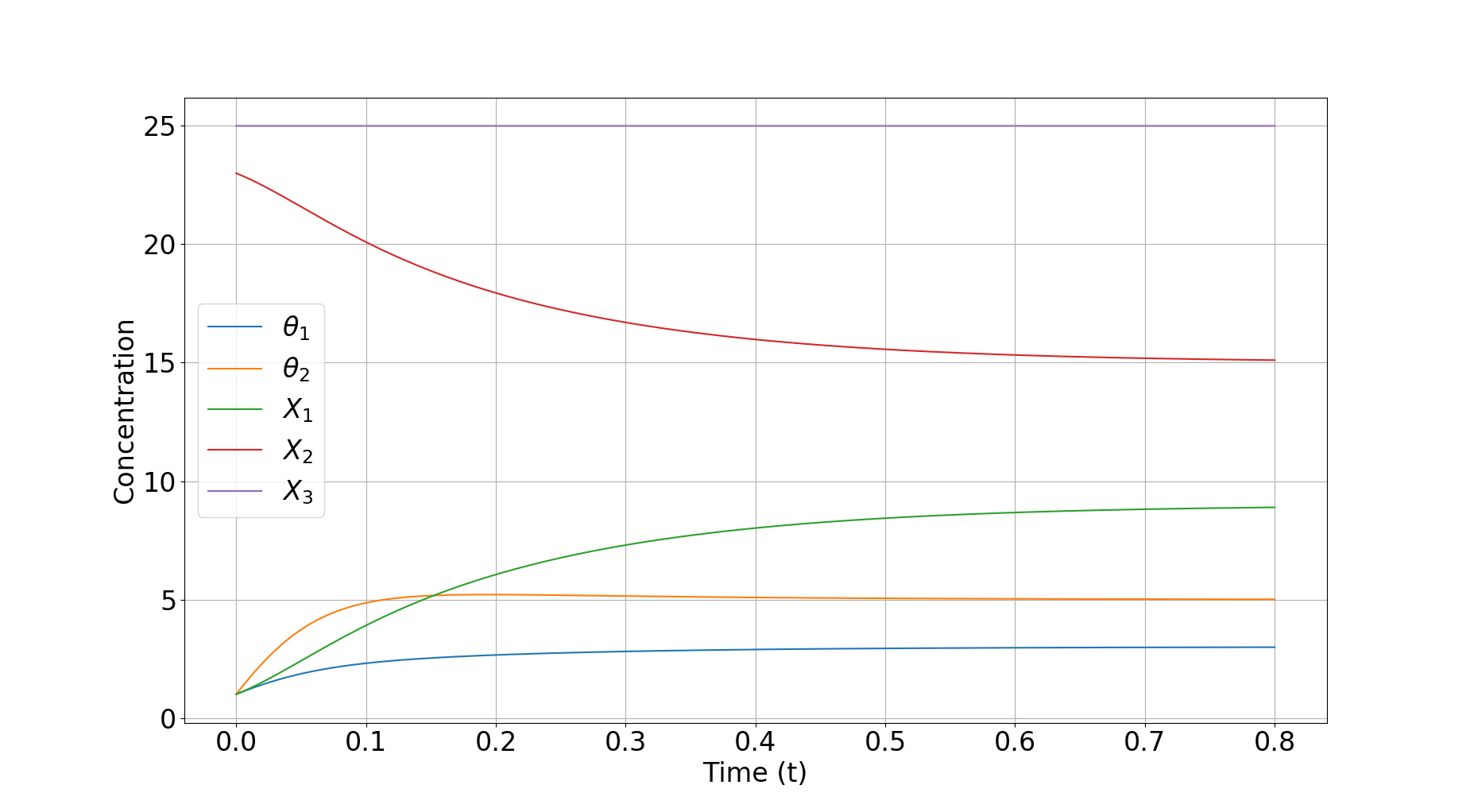}
\]
\end{example}

\begin{example}\label{ex:multipleequilibria}
Consider $A=\left(\begin{array}{ccc}2&1&0\\0&1&2\end{array}\right)$ and $\mathcal{S}=\left(\begin{array}{rrr}1&0&1\\1&1&1\end{array}\right)$.
The vector $\left(\begin{array}{r}1\\0\\-1\end{array}\right)$ spans $\ker \mathcal{S}$. The corresponding reaction network is 
\begin{align*}
X_1&\to X_1 + 2\theta_1 &2\theta_1&\to 0 &&X_2\to X_2+\theta_1+\theta_2 &\theta_1+\theta_2&\to 0\\
X_3&\to X_3+2\theta_2 & 2\theta_2&\to 0 &&X_1 + 2\theta_2\to X_3 + 2\theta_2 &X_3 + 2\theta_1&\to X_1 + 2\theta_1
\end{align*}
Here the concentration of $X_2$ remains invariant with time. Let $c$ be the initial concentration of $X_2$. If $c < 1/3$ then the system admits two stable equilibria and one unstable equilibrium. The points  $(y_1,c,y_2,\sqrt{y_1},\sqrt{y_2})$ and $(y_2,c,y_1,\sqrt{y_2},\sqrt{y_1})$ are the stable equilibria where $y_1=\frac{1-c}{2}+\frac{\sqrt{(1-3c)(1+c)}}{2}$ and $y_2=\frac{1-c}{2}-\frac{\sqrt{(1-3c)(1+c)}}{2}$, and $\left(\frac{1-c}{2},c,\frac{1-c}{2},\sqrt{\frac{1}{3}},\sqrt{\frac{1}{3}}\right)$ is the unstable equilibrium. On the other hand, if $c\geq 1/3$ then there is only one equilibrium point at $\left(\frac{1-c}{2},c,\frac{1-c}{2},\sqrt{\frac{1}{3}},\sqrt{\frac{1}{3}}\right)$, and this point is stable.
\end{example}

\begin{example}\label{ex:extinction}
Consider $A=\left(\begin{array}{ccc}2&1&0\\0&1&2\\\end{array}\right)$ and $\mathcal{S}=\left(\begin{array}{rrr}1&-1&0\\1&1&1\end{array}\right)$. The vector $\left(\begin{array}{r}1\\1\\-2\end{array}\right)$ spans $\ker \mathcal{S}$. The corresponding reaction network is 
\begin{align*}
&X_1\to X_1 + 2\theta_1, &&2\theta_1\to 0, &&X_2\to X_2+\theta_1+\theta_2, &&X_1 + X_2+3\theta_2\to 2X_3 +3\theta_2\\
&X_3\to X_3+2\theta_2, && 2\theta_2\to 0, &&\theta_1+\theta_2\to 0,&& 2X_3 + 3\theta_1\to X_1 + X_2+3\theta_1
\end{align*}
Here the set $\{X_1,X_2,\theta_1\}$ is a critical siphon. If we start at the initial concentrations $x_1=0.05,x_2=0.05,x_3=0.9,\theta_1=0.1,\theta_2=1.0$, then the system converges to $x_1=0,x_2=0,x_3=1,\theta_1=0,\theta_2=1$, hence this system is not persistent. This provides one explanation for this data: all the outcomes were of type $X_3$. If instead we start at $\theta_1=0.5,\theta_2=1.0$ and the same $x$ concentrations, then the system converges to $x_1=x_2=x_3=1/3,\theta_1=\theta_2=1/\sqrt{c}$. This provides a different explanation for the same data: all three outcomes have occurred equally frequently.
\end{example}

\begin{example}\label{ex:boltzmannmachine}
Boltzmann machines are a popular model in machine learning. Formally a Boltzmann machine is a graph $G=(V,E)$, each of whose nodes can be either $1$ or $0$. One associates to every configuration $s \in \{0,1\}^V$ of the Boltzmann machine an energy $E(s)=-\sum_i b_i s_i-\sum_{ij}w_{ij}s_is_j$. The probability of the Boltzmann machine being in configuration $s$ is given by the exponential family $P(s;b,w)\propto \exp(-E(s))$. Boltzmann machines can be used to do inference conditioned on partial observations, and learning of the maximum likelihood values of the parameters $b_i,w_{ij}$ can be done by a stochastic gradient descent.

Our EM scheme can be used to implement the learning rule of arbitrary Boltzmann machines in chemistry. We illustrate the construction on the 3-node Boltzmann machine with $V=\{x_1,x_2,x_3\}$:
\[
	\includegraphics[width=1.5in]{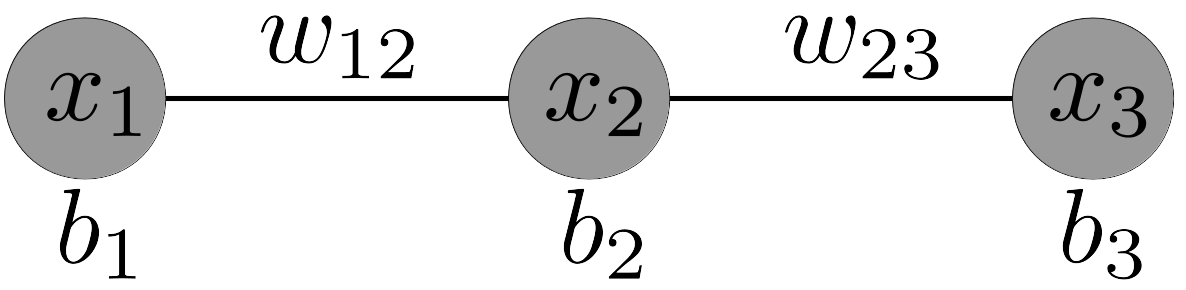}
\]
with biases $b_1,b_2,b_3$ and weights $w_{12}$ and $w_{23}$. We will work with parameters $\theta_i=\exp(b_i)$ and $\theta_{ij}=\exp(w_{ij})$. The design matrix $A=(a_{ij})_{5\times 8}$ is
\[A=\begin{blockarray}{cccccccccc}
    &&X_{000}&X_{001}&X_{010}&X_{011}&X_{100}&X_{101}&X_{110}&X_{111}\\
    \begin{block}{cc[cccccccc]}
      \theta_1&&0&0&0&0&1&1&1&1\\
      \theta_2&&0&0&1&1&0&0&1&1\\
      \theta_3&&0&1&0&1&0&1&0&1\\
      \theta_{12}&&0&0&0&0&0&0&1&1\\
      \theta_{23}&&0&0&0&1&0&0&0&1\\
    \end{block}
  \end{blockarray}\]
and the corresponding exponential model $y_A:\mathbb{R}^5\to\mathbb{R}^8_{>0}$ sends $\theta = (\theta_1,\theta_2,\theta_3,\theta_{
12},\theta_{23})$ to  $(\theta^{a_{.1}},\theta^{a_{.2}},\dots,\theta^{a_{.8}})$. If the node $x_2$ is hidden then the observation matrix $\mathcal S$ is
\[\mathcal{S}=\begin{blockarray}{cccccccc}
    X_{000}&X_{001}&X_{010}&X_{011}&X_{100}&X_{101}&X_{110}&X_{111}\\
    \begin{block}{[cccccccc]}
      1&0&1&0&0&0&0&0\\
      0&1&0&1&0&0&0&0\\
      0&0&0&0&1&0&1&0\\
      0&0&0&0&0&1&0&1\\
    \end{block}
  \end{blockarray}\]
Our EM scheme yields the reaction network
\begin{align*}
&\begin{rcases}
X_{ijk}\to X_{ijk} + i\theta_1+j\theta_2+k\theta_3+ij\theta_{12}+jk\theta_{23},\
\\i\theta_1+j\theta_2+k\theta_3+ij\theta_{12}+jk\theta_{23}\to 0
\end{rcases} \text{ for }i,j,k=0,1\\
&\begin{rcases}
X_{i1k}\to X_{i0k}\
\\ X_{i0k} +\theta_2+i\theta_{12}+k\theta_{23}\to X_{i1k} +\theta_2+i\theta_{12}+k\theta_{23}
\end{rcases}\text{ for }i,k=0,1
\end{align*}

Suppose we observe a marginal distribution $(0.24,0.04,0.17,0.55)$ on the visible nodes $x_1,x_3$. To solve for the maximum likelihood $\hat\theta$, we can initialize the system with $X_{000}=0.24,X_{001}=0.04,X_{010}=0,X_{011}=0,X_{100}=0.17,X_{101}=0.55,X_{110}=0,X_{111}=0$ and all $\theta$'s initialized to $1$, the system reaches steady state at $\hat\theta_1=0.5176,\hat\theta_2=0.0018,\hat\theta_3=0.3881,\hat\theta_{12}=0.8246,\hat\theta_{23}=0.7969, \hat X_{000}=0.2391,\hat X_{001}=0.0389, \hat X_{010}=0.0009, \hat X_{011}=0.011, \hat X_{100}=0.1695, \hat X_{101}=0.5487, \hat X_{110}=0.0005, \hat X_{111}=0.0013$.
\[
	\includegraphics[width=5in]{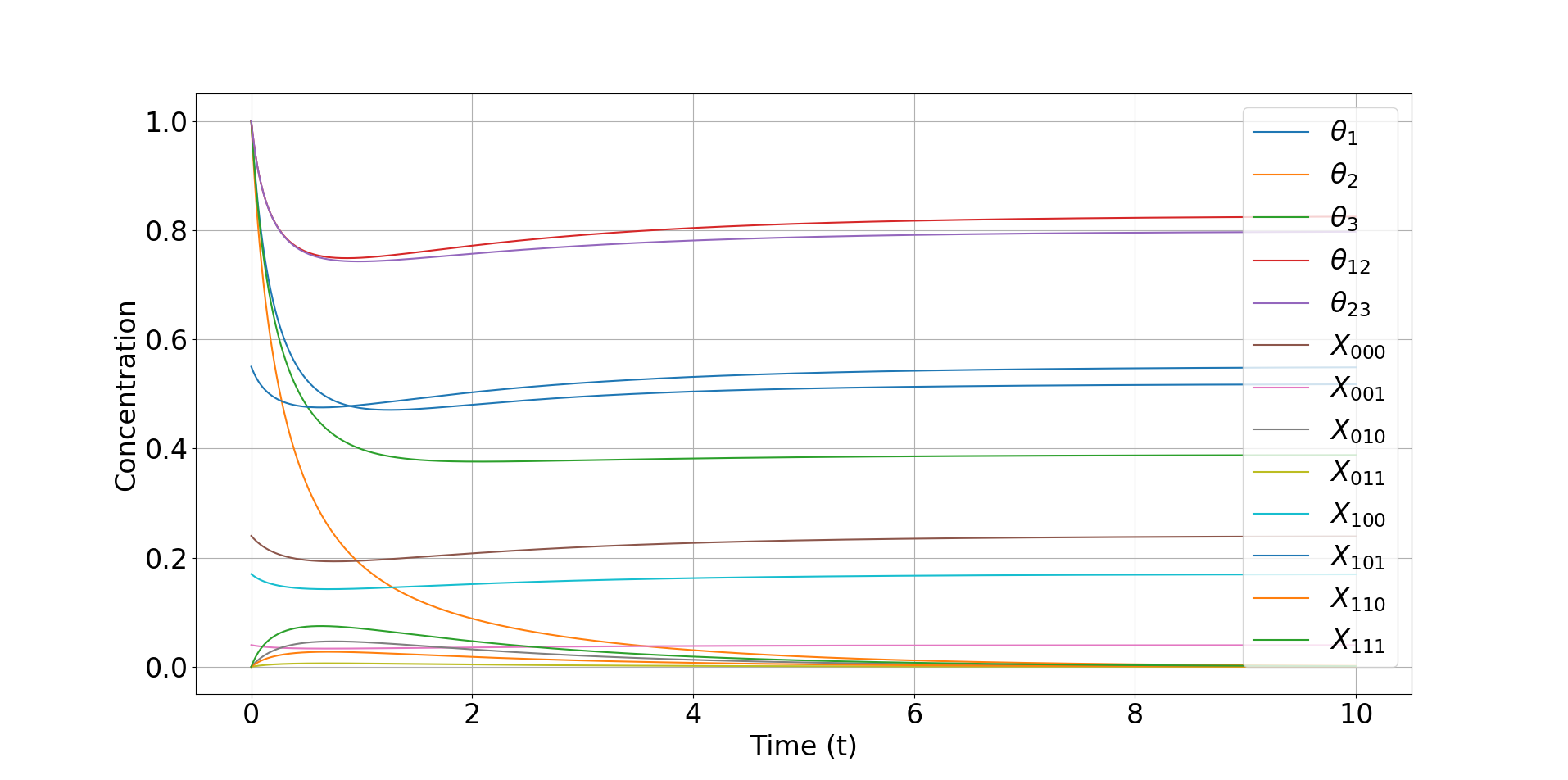}
\]

\end{example}

\section{Related work}
There is a rich history of theoretical and empirical results showing that reaction networks can perform computations~\cite{hjelmfelt1991chemical,Klavins_2011Biomolecular,Winfree_2011Neural,Winfree_2011SquareRoot,sarpeshkar2014analog,napp2013message,buisman2009computing,cardelli2018chemical,soloveichik2008computation,chen2014deterministic}. Typically these results take a known algorithm, and show how to implement it with chemical reaction networks. In contrast, we have obtained what appears to be a new algorithm within the broad class of generalized EM algorithms. Our algorithm is {\em natural} in the sense that it was suggested by the mathematical structure of reaction network dynamics itself, so that analysis of our scheme proceeds from insights about reaction network kinetics rather than from insights about the behavior of some existing classical algorithm. 

Similarities in the mathematical structure of statistics, statistical mechanics, and information theory have been noticed and remarked upon several times \cite{tribus1971energy,csiszar2004information,wainwright2008graphical,zellner1988optimal}, and have led to multiple contributions~\cite{wiener1961cybernetics,cencov2000statistical,Amari2016,mackay2003information,jaynes1957information,baez2012algorithmic} with the goal of presenting some or all of these topics from ``the point of view from which the subject appears in the greatest simplicity,'' to borrow a prescient quote from J. W. Gibbs. Our EM reaction network scheme employs a statistical mechanical system to minimize an information theoretic quantity in the service of solving a statistical problem. It is a concrete illustration of the connections between these three disciplines, and of the opportunities that lie at their intersection.

We now compare our scheme with three other schemes that show how to implement machine learning algorithms with reaction networks.

The belief propagation scheme of Napp and Adams~\cite{napp2013message} shows how reaction networks can implement sum-product algorithms from probabilistic graphical models to compute marginals of joint distributions. There is some formal similarity between the reaction networks of Napp and Adams, and our EM reaction network. In particular, their scheme also has two sets of species, ``sum'' and ``product'' species, and two subnetworks. In each subnetwork, one set of species changes in number, and the other set appears catalytically. We speculate that this may be because message passing algorithms are themselves a special case of the EM algorithm~\cite{ikeda2004stochastic,ikeda2004information}, in which case possibly the Napp-Adams scheme may be related in some as-yet-undiscovered way to our EM scheme.

In a brilliant paper, Zechner et al.~\cite{zechner2016molecular} show how to implement a Kalman filter with reaction networks, and implement this scheme {\em in vivo} in E. Coli. Their approach is to write down a dynamical system describing the filter, then change variables if required so the dynamical system falls within the class of systems that are implementable with CRNs, and finally obtain a DNA strand displacement implementation for the dynamical system. Their work shows that filters, being self-correcting, are robust, and can tolerate some amount of model mismatch. Further, such systems when implemented {\em in vitro} and {\em in vivo} work as advertised. This is very encouraging for the empirical prospects of our schemes.

There appears to be a certain degree of art involved in Zechner et al.'s setting to get the right change of variables which makes the dynamical system implementable by CRNs. In comparison, the information processing task directly informs our CRN architecture. Since Hidden Markov Models (HMMs) are special cases of exponential families as well, our EM reaction network can in principle be extended to implementations of the forward-backward algorithm for HMMs, which is closely related to the Kalman filter.

In~\cite{poole2017chemical}, Poole et al. have shown how to implement a Boltzmann machine with reaction networks. The reaction network is able to do inference, but the Boltzmann machine training to learn weights has to be done {\em in silico}. In Example~\ref{ex:boltzmannmachine}, we have shown how the EM algorithm can also be used to implement Boltzmann machines. There are pros and cons to our EM approach for this problem. The advantage is that Boltzmann machine training also happens {\em in vitro} and in an online manner. The disadvantage is that as described the EM implementation has not exploited the graphical structure of Boltzmann machines, and hence requires an exponentially large number of species for implementation. In contrast, the Poole et al. construction requires a linear number of species.

\section{Discussion}\label{sec:discuss}
\subsection{Rate of convergence}\label{subsec:rate}
Speed is a key aspect of the analysis of any algorithm. We would like to be able to say that every mass-action trajectory $x(t)$ to a reaction network described by our scheme converges exponentially fast to the stationary state $x^*$, i.e. there exists $T>0$ such that for all $\epsilon>0$, for all $x(0)\in\mathbb{R}^S_{\geq 0}$, for all $\tau\geq T \log \frac{d(x(0),x^*)}{\epsilon}$ the distance $d(x(\tau),x^*)\leq \epsilon$. We would like to say this for the E-projection system, the M-projection system, and the EM system. The E-projection system has the nicest structure, being detailed balanced, and hence is the first candidate for showing such a result. Even here, the best available result appears to be slightly weaker: Desvillettes et al.~\cite{desvillettes2017trend} have shown that if a detailed balanced system has no boundary equilibria then for all $x(0)\in\mathbb{R}^S_{\geq 0}$, there exists $T>0$ such that for all $\epsilon>0$, for all $\tau\geq T \log \frac{d(x(0),x^*)}{\epsilon}$ the distance $d(x(\tau),x^*)\leq \epsilon$. The gap is that here $T$ may depend on $x(0)$. We conjecture that exponential convergence should be true for the E-projection and M-projection systems. One could try to prove this via an entropy production inequality: there exists $\lambda>0$ such that $-\dot D(t) \geq\lambda D(t)- D^*$ where $D^*$ is the value of $D$ at the stationary point.

\subsection{A proposal for how a biological cell infers its environment}
Biological cells are capable of identifying and responding to the environment from the information provided to them by transmembrane receptors. Given partial observations, biochemical reaction networks have to identify the most likely environment that could have caused these observations. Could these networks be employing some variant of our EM scheme? We attempt to sketch how elements of our scheme could map onto the biological situation. 

A single receptor may be sensitive in different degrees to multiple ligands, so the information conveyed into the cell from the activation of a receptor may not be unlike a message sent by the referee in Example~\ref{ex:running}. Each type of transmembrane receptor corresponds to a row of the sensitivity matrix $\SM$. Species $X_1,X_2,\dots,X_n$ are ``shadow'' species inside the cell whose job is to track the concentration of the corresponding ligand species outside. Presumably this matrix $\SM$ is ``known'' to the cell, in the sense that it is hardcoded in some way into the structure of the biochemical reaction network.
 
The parameters $\theta$ to be estimated may correspond to certain underlying environmental variables that need to be monitored by the cell. For example, $\theta_1$ may encode for the threat level in the environment, and $\theta_2$ for the food level. These parameters may not be directly accessible, instead they have to be inferred indirectly from proxy measurements. The proxies are the ligand molecules whose concentrations are known to obey some law which depends on these parameters. The precise form of the law is described by the design matrix $A$ through the map $y_A$, and has presumably been learnt by the cell through evolution, so that it is hard-coded into the structure of the biochemical reaction network. 

In this setting, the EM reaction network behaves like an online algorithm. As new information streams into the cell, the reaction network dynamics tracks the current state of the environment, making the necessary modifications to the concentrations of the $\theta$ and $X$ species. 

Biochemical circuits in living cells, which have had the advantage of several billion years of evolution, are likely to be far more sophisticated than the scheme we have suggested. Nevertheless, we would like to think that our scheme can be a starting point for the study of the actual schemes that cells employ, to initiate a conversation between the analytical approach of systems biology and the synthetic biology approach.

\subsection{Stochastic Aspects of our Scheme}
For pedagogical reasons, we have focused in this paper on deterministic dynamics. We now indicate certain stochastic aspects of our scheme. We have shown in \cite{virinchi2017stochastic} that at steady state, the E-Projection reaction system samples from the Bayesian posterior distribution $\Pr[x \mid s]$ when the system is evolved according to stochastic mass-action kinetics. In this sense, the stochastic dynamics is the ``right'' dynamics for E-Projection. In contrast, we do not have a similar result for stochastic dynamics in the M-projection system. We appear to need the large-volume limit here, so that deterministic mass-action kinetics is a good description of the system's evolution.

This suggests that the best way to implement this EM scheme using molecules would be to employ a hybrid dynamics, whereby each unit volume in an infinite volume vessel implements stochastic E-Projection system, with the $X$ species localized in these unit volumes, whereas the $\theta$ species can move freely in the entire infinite volume, so that their dynamics is correctly described by deterministic mass action. This could be biologically meaningful if the localized molecules around different receptors can be related to the $X$ species, and the molecules that move freely in the cytosol and interact with different receptors can be related to the $\theta$ species. Note that the volume for such hybrid dynamics need not be truly infinite, since even copy numbers as low as $20$ typically lead to a dynamics which is well described by deterministic mass-action kinetics.

\bibliographystyle{plain}
\bibliography{../../eventsystems}
\end{document}